\newcommand{\wt}{\::\:}
\newcommand{\st}{\:|\:}
\newcommand{\ltdots}{..}
\DeclareMathOperator*{\argmax}{argmax}
\newcommand{\query}{\mathsf{Query}}
\newcommand{\lasttoreach}{\mathtt{last2reach}}
\newcommand{\forward}{\mathtt{forward}}
\renewcommand{\index}{\mathtt{index}}
\newcommand{\mm}{\mathtt{m}}
\newcommand{\uu}{\mathtt{u}}
\def\EnableMNotes{1}
\newcommand{\mynote}[1]{\footnote{\sf #1}}
\newcommand{\mynote}[1]{}
\newtheorem{observation}{Observation}
\newcommand{\alex}[1]{\textcolor{black}{#1}}
\newcommand{\veli}[1]{\textcolor{black}{#1}}
\newcommand{\No}[1]{}
\begin{document}

\title{Using Minimum Path Cover to Boost Dynamic Programming on DAGs: Co-Linear Chaining Extended}

\author{Anna Kuosmanen\inst{1} \and Topi Paavilainen\inst{1} \and Travis Gagie\inst{2} \and Rayan Chikhi\inst{3} \and Alexandru Tomescu\inst{1}$^{,\star}$ \and Veli M{\"a}kinen\inst{1}\thanks{Shared last author contribution}$^{,}$\thanks{Corresponding author: \url{veli.makinen@helsinki.fi}}}

\institute{Helsinki Institute for Information Technology HIIT, 
Department of Computer Science, University of Helsinki, Finland
\and
Diego Portales University, Chile
\and
CNRS, CRIStAL, University of Lille 1, France
}

\maketitle
\thispagestyle{plain}

\begin{abstract}
Aligning sequencing reads on graph representations of genomes is an important ingredient of \emph{pan-genomics} (Marschall et al. \emph{Briefings in Bioinformatics}, 2016). 
Such approaches typically find a set of \emph{local anchors} that indicate plausible matches between substrings of a read to subpaths of the graph. These anchor matches are then combined to form a 
(semi-local) alignment of the complete read on a subpath. \emph{Co-linear chaining} is an algorithmically rigorous approach to combine the anchors. It is a well-known approach for the case of two sequences as inputs. 
Here we extend the approach so that one of the inputs can be a directed acyclic graph (DAGs), e.g. a \emph{splicing graph} in transcriptomics or \emph{variant graph} in pan-genomics. 

The extension of co-linear chaining to DAGs turns out to have a tight connection to the \emph{minimum path cover} problem that asks us to find 
a minimum-cardinality set of paths that cover all the nodes of a DAG. We study the case when the size $k$ of a minimum path cover is small, which is often the case in practice.
First, we propose an algorithm for finding a minimum path cover of a DAG $(V,E)$ in $O(k|E|\log|V|)$ time, improving all known time-bounds when $k$ is small and the DAG is not too dense. An immediate consequence is an improved space/time tradeoff for reachability queries in arbitrary directed graphs.
Second, we introduce a general technique for extending dynamic programming (DP) algorithms from sequences to DAGs. This is enabled by our minimum path cover algorithm, and works by mimicking the DP algorithm for sequences on each path of the minimum path cover. This technique generally produces algorithms that are slower than their counterparts on sequences only by a factor $k$. We illustrate this on two classical problems extended to labeled DAGs: longest increasing subsequence and longest common subsequence. For the former we obtain an algorithm with running time $O(k|E|\log |V|)$. This matches the optimal solution to the classical problem variant when, e.g., the input sequence is modeled as a path. We obtain an analogous result for the longest common subsequence problem.
Finally, we apply this technique to the co-linear chaining problem, that is a generalization of both of the above two problems. The algorithm for this problem turns out to be more involved, needing further ingredients, such as an FM-index tailored for large alphabets, and a two-dimensional range search tree modified to support range maximum queries. We implemented the new co-linear chaining approach. Experiments on splicing graphs show that the new method is efficient also in practice.
\end{abstract}

\section{Introduction\label{sect:intro}}

A \emph{path cover} of a DAG $G = (V,E)$ is a set of paths such that every node of $G$ belongs to some path. A \emph{minimum path cover} (MPC) is one having the minimum number of paths. The size of a MPC is also called the \emph{width} of $G$.
Many DAGs commonly used in genome research, such as graphs encoding human mutations~\cite{variationgraph} and graphs modeling gene transcripts~\cite{splicing_graphs}, can consist, in the former case, of millions of nodes and, in the latter case, of thousands of nodes. However, they generally have a small width on average; for example, \alex{splicing graphs for most genes in human chromosome 2 have width at most 10~\cite[Fig.~7]{TGP2015}.} 
To the best of our knowledge, among the many MPC algorithms~\cite{Fulkerson:1956fk,journals/siamcomp/HopcroftK73,journals/siamcomp/Schnorr78,Orlin:2013:MFO:2488608.2488705,chain-cover,Chen2014}, there are only \alex{three} whose complexities depends on the width of the DAG. Say the width of $G$ is $k$. \alex{The first algorithm runs in time $O(|V||E| + k|V|^2)$ and can be obtained by slightly modifying an algorithm for finding a minimum chain cover in partial orders from \cite{Felsner2003}. The other two algorithms are due to Chen and Chen: the first one} works in time $O(|V|^2 + k\sqrt{k}|V|)$~\cite{chain-cover}, and the second one works in time $O(\max(\sqrt{|V|}|E|,k\sqrt{k}|V|))$~\cite{Chen2014}. 

In this paper we present an MPC algorithm running in time $O(k|E|\log|V|)$. For example, for $k = o(\sqrt{|V|} / \log |V|)$ and $|E| = O(|V|^{3/2})$, this is better than all previous algorithms. Our algorithm is based on the following standard reduction of a minimum flow problem to a maximum flow problem (see e.g.~\cite{Ahuja:1993fk}): (i) find a feasible flow/path cover satisfying all demands, and (ii) solve a maximum flow problem in a graph encoding how much flow can be removed from every edge. Our main insight is to solve step (i) by finding an approximate solution that is greater than the optimal one only by a $O(\log |V|)$ factor. Then, if we solve step (ii) with the Ford-Fulkerson algorithm, the number of iterations can be bounded by $O(k\log|V|)$.

\veli{We then proceed to show that some problems (like pattern matching) that admit efficient \emph{sparse dynamic programming} solutions on sequences \cite{EGRI92} can be extended to DAGs, so that their complexity increases only by the minimum path cover size $k$. Extending pattern matching to DAGs has been studied before \cite{PK95,ALL00,Nav00}. For those edit distance -based formulations our approach does not yield an improvement, but on formulations involving sparse set of matching anchors \cite{EGRI92} we can boost the naive solutions of their DAG extensions by exploiting a path cover. Namely, our improvement applies to many cases where a data structure over previously computed solutions is maintained and queried for computing the next value.}
Our new MPC algorithm enables this, as its complexity is generally of the same form as that of solving the extended problems. 
Given a path cover, our technique then computes so-called \emph{forward propagation links} indicating how the partial solutions in each path in the cover must be synchronized.

To best illustrate the versatility of the technique itself, we show (in the Appendix) how to compute a longest increasing subsequence (LIS) in a labeled DAG, in time $O(k |E| \log |V|)$. This matches the optimal solution to the classical problem on a single sequence when, e.g., this is modeled as a path (where $k=1$). We also illustrate our technique with the longest common subsequence (LCS) problem between a labeled DAG $G = (V,E)$ and a sequence $S$.

Finally, we consider the main problem of this paper---co-linear chaining (CLC)---first introduced in~\cite{MW95}. It has been proposed as a model of the sequence alignment problem that scales to massive inputs, and has been a subject of recent interest (see e.g.~\cite{Patro:2017aa,Uricaru2015,Vyverman2015,vyverman2014,DBLP:conf/alcob/WandeltL14,DBLP:journals/bmcbi/MakinenSY12,SK03}). In the CLC problem, the input is directly assumed to be a set of $N$ pairs of intervals in the two sequences that match (either exactly or approximately). The CLC alignment solution asks for a subset of these plausible pairs that maximizes the coverage in one of the sequences, and whose elements appear in increasing order in both sequences. The fastest algorithm for this problem runs in the optimal $O(N \log N)$ time~\cite{Abo07}.

We define a generalization of the CLC problem between a sequence and a labeled DAG. As motivation, we mention the problem of aligning a long sequence, or even an entire chromosome, inside a DAG storing all known mutations of a population with respect to a reference genome (such as the above-mentioned~\cite{variationgraph}\veli{, or more specificly a linearized version of it \cite{Haus17}}). Here, the $N$ input pairs match intervals in the sequence with paths (also called \emph{anchors}) in the DAG. This problem is not straightforward, as the topological order of the DAG might not follow the reachability order between the anchors. 
\No{Explicitly computing the reachabilities between every pair of nodes in $G$ requires $O(|V||E|)$ time, and even computing only the reachabilities between the $N$ anchors requires $O(N|E|)$ time. This already matches the time required for the brute force approach of checking every pair against every other pair.}
Existing tools for aligning DNA sequences to DAGs (BGREAT~\cite{limasset2016read}, vg~\cite{novak2016graph}) rely on anchors but do not explicitly consider solving CLC optimally on the DAG.

The algorithm we propose uses the general framework mentioned above. Since it is more involved, we will develop it in stages. We first give a simple approach to solve a relaxed co-linear chaining problem using $O((|V|+|E|) N)$ time, then introduce the MPC approach that requires $O(k|E| \log |V| + kN \log N)$ time. As above, if the DAG is a labeled path representing a sequence, the running time of our algorithm is reduced to the best current solution for the co-linear chaining problem on sequences, $O(N \log N)$~\cite{Abo07}. We conclude (in the Appendix) with a Burrows-Wheeler technique to efficiently handle a special case that we omitted in this relaxed variant.  We remark that one can reduce the LIS and LCS problems to the CLC problem to obtain the same running time bounds as mentioned earlier; these are given for the sake of comprehensiveness.

In the last section we discuss the anchor-finding preprocessing step. We implemented the new MPC-based co-linear chaining algorithm and conducted experiments on splicing graphs to show that the approach is practical, once anchors are given. Some future directions on how to incorporate practical anchors, and how to apply the techniques to transcript prediction, are discussed. 

\paragraph{Notation.} To simplify notation, for any DAG $G = (V,E)$ we will assume that $V$ is always $\{1,\dots,|V|\}$ and that $1,\dots,|V|$ is a topological order on $V$ (so that for every edge $(u,v)$ we have $u < v$). We will also assume that $|E| \geq |V| - 1$. A \emph{labeled DAG} is a tuple $(V,E,\ell,\Sigma)$ where $(V,E)$ is a DAG and $\ell : V \mapsto \Sigma$ assign to the nodes labels from $\Sigma$, $\Sigma$ being an ordered alphabet.

For a node $v \in V$, we denote by $N^-(v)$ the set of in-neighbors of $v$ and by $N^+(v)$ the set of out-neighbors of $v$. If there is a (possibly empty) path from node $u$ to node $v$ we say that $u$ reaches $v$. We denote by $R^-(v)$ the set of nodes that reach $v$. 
We denote a set of consecutive integers with interval notation $[i..j]$, meaning $\{i,i+1,\ldots,j\}$. For a pair of intervals $m=([x..y],[c..d])$, we use $m.x$, $m.y$, $m.c$, and $m.d$ to denote the four respective endpoints. We also consider pairs of the form $m=(P,[c..d])$ where $P$ is a path, and use $m.P$ to access $P$. The first node of $P$ will be called its \emph{startpoint}, and its last node will be called its \emph{endpoint}. For a set $M$ we may fix an order, to access an element as $M[i]$.

\section{The MPC algorithm}

In this section we assume basic familiarity with network flow concepts; see~\cite{Ahuja:1993fk} for further details. In the \emph{minimum flow problem}, we are given a directed graph $G = (V,E)$ with a single source and a single sink, with a \emph{demand} $d : E \rightarrow \mathbb{Z}$ for every edge. The task is to find a flow of minimum value (the \emph{value} is the sum of the flow on the edges exiting the source) that satisfies all demands (to be called \emph{feasible}). The standard reduction from the minimum path cover problem to a minimum flow one (see, e.g.~\cite{MR545530}) creates a new DAG $G^\ast$ by replacing each node $v$ with two nodes $v^-,v^+$, adds the edge $(v^-,v^+)$ and adds all in-neighbors of $v$ as in-neighbors of $v^-$, and all out-neighbors of $v$ as out-neighbors of $v^+$. Finally, the reduction adds a global source with an out-going edge to every node, and a global sink with an in-coming edge from every node. Edges of type $(v^-,v^+)$ get demand $1$, and all other edges get demand $0$. The value of the minimum flow equals $k$, the width of $G$, and any decomposition of it into source-to-sink paths induces a minimum path cover in $G$.

Our MPC algorithm is based on the following simple reduction of a minimum flow problem to a maximum flow one (see e.g.~\cite{Ahuja:1993fk}): (i) find a feasible flow $f : E \rightarrow \mathbb{Z}$; (ii) transform this into a minimum feasible flow, by finding a maximum flow $f'$ in $G$ in which every $e \in E$ now has capacity $f(e) - d(e)$. The final minimum flow solution is obtained as $f(e) - f'(e)$, for every $e \in E$. Observe that this path cover induces a flow of value $O(k\log|V|)$. Thus, in step (ii) we need to shrink this flow into a flow of value $k$. If we run the Ford-Fulkerson algorithm, this means that there are $O(k\log|V|)$ successive augmenting paths, each of which can be found in time $O(E)$. This gives a time bound for step (ii) of $O(k|E|\log|V|)$.

We solve step (i) in time $O(k|E|\log|V|)$ by finding a path cover in $G^\ast$ whose size is larger than $k$ only by a relative factor $O(\log |V|)$. This is based on the classical greedy set cover algorithm, see e.g.~\cite[Chapter 2]{Vaz01}: at each step, select a path covering most of the remaining uncovered nodes. 

\alex{This approach is similar to the one from \cite{Felsner2003} for finding the minimum number $k$ of chains to cover a partial order of size $n$. A \emph{chain} is a set of pairwise comparable elements. The algorithm from \cite{Felsner2003} runs in time $O(kn^2)$, and it has the same feature as ours: it first finds a set of $O(k\log n)$ chains in the same way as us (longest chains covering most uncovered elements), and then in a second step reduces these to $k$. However, if we were to apply this algorithm to DAGs, it would run in time $O(|V||E| + k|V|^2)$, which is slower than our algorithm for small $k$. This is because it uses the classical reduction given by Fulkerson~\cite{Fulkerson:1956fk} to a bipartite graph, where each edge of the graph encodes a pair of elements in the relation. Since DAGs are not transitive in general, to use this reduction one needs first to compute the transitive closure of the DAG, in time $O(|V||E|)$. Such approximation-refinement approach has also been applied to other covering problems on graphs, such as a 2-hop cover \cite{DBLP:journals/siamcomp/CohenHKZ03}.}

We now show how to solve step (i) within the claimed running time, by dynamic programming.

\begin{lemma}
\label{lemma:approx-path-cover}
Let $G = (V,E)$ be a DAG, and let $k$ be the width of $G$. In time $O(k|E|\log |V|)$, we can compute a path cover $P_1,\dots,P_K$ of $G$, such that $K = O(k\log |V|)$.
\end{lemma}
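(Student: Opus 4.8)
The goal is to implement the greedy set-cover algorithm on $G^\ast$ efficiently, where at each step we extract a path covering the maximum number of still-uncovered nodes, until all nodes are covered, and then bound both the number of iterations and the per-iteration cost. The two things to prove are: (a) the greedy process terminates after $K = O(k\log|V|)$ paths; (b) each iteration can be implemented in $O(|E|)$ time, so that the total is $O(k|E|\log|V|)$.

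For (a), I would invoke the standard analysis of greedy set cover. Since $G$ has width $k$, by Dilworth's theorem (or directly, by minimum flow value $=k$) there is a path cover of $G$ with only $k$ paths; these $k$ paths form a "solution" to the set-cover instance whose universe is $V$ and whose candidate sets are all paths of $G$. Hence after each greedy step that picks the path covering the most uncovered nodes, the number of uncovered nodes drops by at least a $(1-1/k)$ factor — because some one of the $k$ optimal paths must cover at least a $1/k$ fraction of whatever is still uncovered. Starting from $|V|$ uncovered nodes, after $O(k\log|V|)$ steps the count drops below $1$, i.e. everything is covered. This gives $K = O(k\log|V|)$.

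For (b), the key subroutine is: given the current set $U\subseteq V$ of uncovered nodes, find a path in $G$ maximizing the number of its nodes that lie in $U$. This is a longest-path computation with node weights $1$ for $u\in U$ and $0$ otherwise, which, since $1,\dots,|V|$ is a topological order, is a single $O(|V|+|E|) = O(|E|)$ sweep: process nodes in topological order maintaining for each $v$ the best weighted path ending at $v$, namely $[\![v\in U]\!] + \max_{u\in N^-(v)} \mathrm{best}(u)$, then trace back from the argmax. (On $G^\ast$ one does the same; the node split and global source/sink only change constants.) After extracting the path we mark its nodes as covered and repeat. I would also note that the described minimum-flow reduction means we should really run this on $G^\ast$ and count covered nodes of type $v^-$/$v^+$, but since each original node $v$ corresponds to the forced edge $(v^-,v^+)$, covering $v$ in $G$ is the same as routing a unit through that edge; the greedy paths in $G$ lift to source-to-sink paths in $G^\ast$ carrying the "feasible flow" $f$ of value $K = O(k\log|V|)$, which is exactly what step (ii) of the MPC algorithm consumes.

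The main obstacle is not any single step but making the set-cover reduction clean: one must be slightly careful that the candidate sets over which greedy optimizes — arbitrary paths of $G$ — are rich enough that the $k$-path optimum is among the "feasible covers", yet that the per-step maximization over this exponentially large family is still the easy topological DP above. Once that is pinned down, everything else is routine: multiply $O(|E|)$ per iteration by $O(k\log|V|)$ iterations to get the claimed $O(k|E|\log|V|)$ bound, and output $P_1,\dots,P_K$.
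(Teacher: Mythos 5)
Your proposal is correct and follows essentially the same route as the paper: greedily extract the path covering the most uncovered nodes via a single topological-order DP sweep costing $O(|E|)$ per iteration, and bound the number of iterations by $O(k\log|V|)$ using the classical greedy set-cover analysis with the optimal $k$-path cover as witness. The only differences are cosmetic (the paper's DP runs in reverse topological order computing the best path \emph{starting} at each node, and it leaves the $(1-1/k)$ decrease argument implicit by citing the standard set-cover analysis, which you spell out).
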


\begin{proof}
The algorithm works by choosing, at each step, a path that covers the most uncovered nodes. For every node $v \in V$, we store $\mm[v] = 1$, if $v$ is not covered by any path, and $\mm[v] = 0$ otherwise. We also store $\uu[v]$ as the largest number of uncovered nodes on a path starting at $v$. The values $\uu[\cdot]$ are computed by dynamic programming, by traversing the nodes in inverse topological order and setting $\uu[v] = \mm[v] + \max_{w \in N^+(v)} \uu[v]$. 
Initially we have $\mm[v] = 1$ for all $v$. We then compute $\uu[v]$ for all $v$, in time $O(|E|)$. By taking the node $v$ with the maximum $\uu[v]$, and tracing back along the optimal path starting at $v$, we obtain our first path in time $O(|E|)$. We then update $\mm[v] = 0$ for all nodes on this path, and iterate this process until all nodes are covered. This takes overall time $O(K|E|)$, where $K$ is the number of paths found.

This algorithm analysis is identical to the one of the classical greedy set cover algorithm~\cite[Chapter 2]{Vaz01}, because the universe to be covered is $V$ and each possible path in $G$ is a possible covering set, which implies that $K = O(k\log |V|)$.
\qed
\end{proof}

Combining Lemma~\ref{lemma:approx-path-cover} with the above-mentioned application of the Ford-Fulkerson algorithm, we obtain our first result:

\begin{theorem}
\label{thm:MPC}
Given a DAG $G = (V,E)$ of width $k$, the MPC problem on $G$ can be solved in time $O(k|E|\log|V|)$.
\end{theorem}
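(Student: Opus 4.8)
The plan is to assemble Theorem~\ref{thm:MPC} from the two ingredients developed just above it: the approximate path cover of Lemma~\ref{lemma:approx-path-cover}, and the standard minimum-flow / maximum-flow machinery described in the prose preceding the lemma. First I would recall the classical reduction that transforms the MPC problem on $G=(V,E)$ into a minimum flow problem on the doubled graph $G^\ast$ (split every $v$ into $v^-,v^+$ with a demand-$1$ edge between them, reroute in- and out-neighbours, add a global source and sink with demand-$0$ edges). In $G^\ast$ the value of a minimum feasible flow equals the width $k$ of $G$, and any decomposition of such a flow into source-to-sink paths yields an MPC of $G$. Note $|V(G^\ast)| = 2|V|+2 = O(|V|)$ and $|E(G^\ast)| = O(|E|)$, so asymptotic bounds stated in terms of $|V|,|E|$ are unaffected by the reduction.

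Next I would carry out step~(i): obtain a feasible flow $f$ on $G^\ast$. Apply Lemma~\ref{lemma:approx-path-cover} to $G$ to get a path cover $P_1,\dots,P_K$ with $K=O(k\log|V|)$, computed in $O(k|E|\log|V|)$ time. Each $P_i$ maps to a source-to-sink path in $G^\ast$; routing one unit of flow along each of these $K$ paths and summing gives a flow $f$ of value $K = O(k\log|V|)$ that saturates the demand of every edge $(v^-,v^+)$ (since every node of $G$ lies on at least one $P_i$) and hence is feasible. Building $f$ takes $O(K|E|) = O(k|E|\log|V|)$ time.

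Then step~(ii): turn $f$ into a \emph{minimum} feasible flow. Form the residual-type network where edge $e$ gets capacity $f(e)-d(e)$ (the amount of flow that may still be removed from $e$), and compute a maximum flow $f'$ there by Ford--Fulkerson. The key accounting is that $|f'| \le |f| - k = O(k\log|V|) - k = O(k\log|V|)$, so Ford--Fulkerson terminates after $O(k\log|V|)$ augmenting-path iterations, each found by a graph search in $O(|E|)$ time, for a total of $O(k|E|\log|V|)$. The corrected flow $e \mapsto f(e) - f'(e)$ is then a feasible flow of minimum value $k$; decomposing it into $k$ source-to-sink paths (a single $O(|E|)$ traversal per path, $O(k|E|)$ total) and projecting back to $G$ gives a minimum path cover. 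Summing the three phases yields the claimed $O(k|E|\log|V|)$ bound.

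The main obstacle — and the point deserving the most care — is the correctness and termination argument for the flow-shrinking step: one must verify that the ``how much flow can be removed'' network is set up so that its maximum flow exactly equals $|f| - k$ (equivalently, that subtracting a max flow there attains the true minimum, which is where the standard min-flow/max-flow duality must be invoked), and that the iteration count is therefore genuinely $O(k\log|V|)$ rather than merely $O(|f|)$ with no better bound. Everything else — the reduction's validity, the $O(|E|)$ cost of each augmentation and of path decomposition, and the fact that the input to step~(i) is supplied by Lemma~\ref{lemma:approx-path-cover} — is routine, so the write-up can cite \cite{Ahuja:1993fk,MR545530} for the flow facts and focus the argument on chaining the time bounds together.
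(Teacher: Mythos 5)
Your proposal is correct and follows essentially the same route as the paper: reduce MPC to minimum flow on the split graph $G^\ast$, use Lemma~\ref{lemma:approx-path-cover} to obtain a feasible flow of value $O(k\log|V|)$, and then shrink it to the optimum with $O(k\log|V|)$ Ford--Fulkerson augmentations of $O(|E|)$ time each. The paper's own proof is just the one-line combination of the lemma with the flow-shrinking argument given in the preceding prose, so nothing further is needed.
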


\section{The dynamic programming framework}
\label{sec:framework}

In this section we give an overview of the main ideas of our approach. 

Suppose we have a problem involving DAGs that is solvable, for example by dynamic programming, by traversing the nodes in topological order. Thus, assume also that a partial solution at each node $v$ is obtainable from all (and only) nodes of the DAG that can reach $v$, plus some other independent objects, such as another sequence. Furthermore, suppose that at each node $v$ we need to query (and maintain) a data structure $\mathcal{T}$ that depends on $R^-(v)$ and such that the answer $\query(R^-(v))$ at $v$ is decomposable as:
\begin{equation}
\query(R^-(v)) = \bigoplus_i \query(R^-_i(v)).
\label{eq:decomposition}
\end{equation}
In the above, the sets $R^-_i(v)$ are such that $R^-(v) = \bigcup_i R^-_i(v)$, they are not necessarily disjoint, and $\bigoplus$ is some operation on the queries, such as min or max, that does not assume disjointness. It is understood that after the computation at $v$, we need to update $\mathcal{T}$. It is also understood that once we have updated $\mathcal{T}$ at $v$, we cannot query $\mathcal{T}$ for a node before $v$ in topological order, because it would give an incorrect answer.

The first idea is to decompose the graph into a path cover $P_1,\dots,P_K$. As such, we decompose the computation only along these paths, in light of (\ref{eq:decomposition}). We replace a single data structure $\mathcal{T}$ with $K$ data structures $\mathcal{T}_1,\dots,\mathcal{T}_K$, and perform the operation from (\ref{eq:decomposition}) on the results of the queries to these $K$ data structures.

Our second idea concerns the order in which the nodes on these $K$ paths are processed. Because the answer at $v$ depends on $R^-(v)$, we cannot process the nodes on the $K$ paths (and update the corresponding $\mathcal{T}_i$'s) in an arbitrary order. As such, for every path $i$ and every node $v$, we distinguish the \emph{last node} on path $i$ that reaches $v$ (if it exists). We will call this node $\lasttoreach[v,i]$. See Figure~\ref{fig:propagation-links} for an example. We note that this insight is the same as in \cite{Jagadish:1990:CTM:99935.99944}, which symmetrically identified the \emph{first} node on a chain $i$ that can be reached from $v$ (a \emph{chain} is a subsequence of a path). The following observation is the first ingredient for using the decomposition (\ref{eq:decomposition}).

\begin{figure}[t]
\centering
\includegraphics[width=8cm]{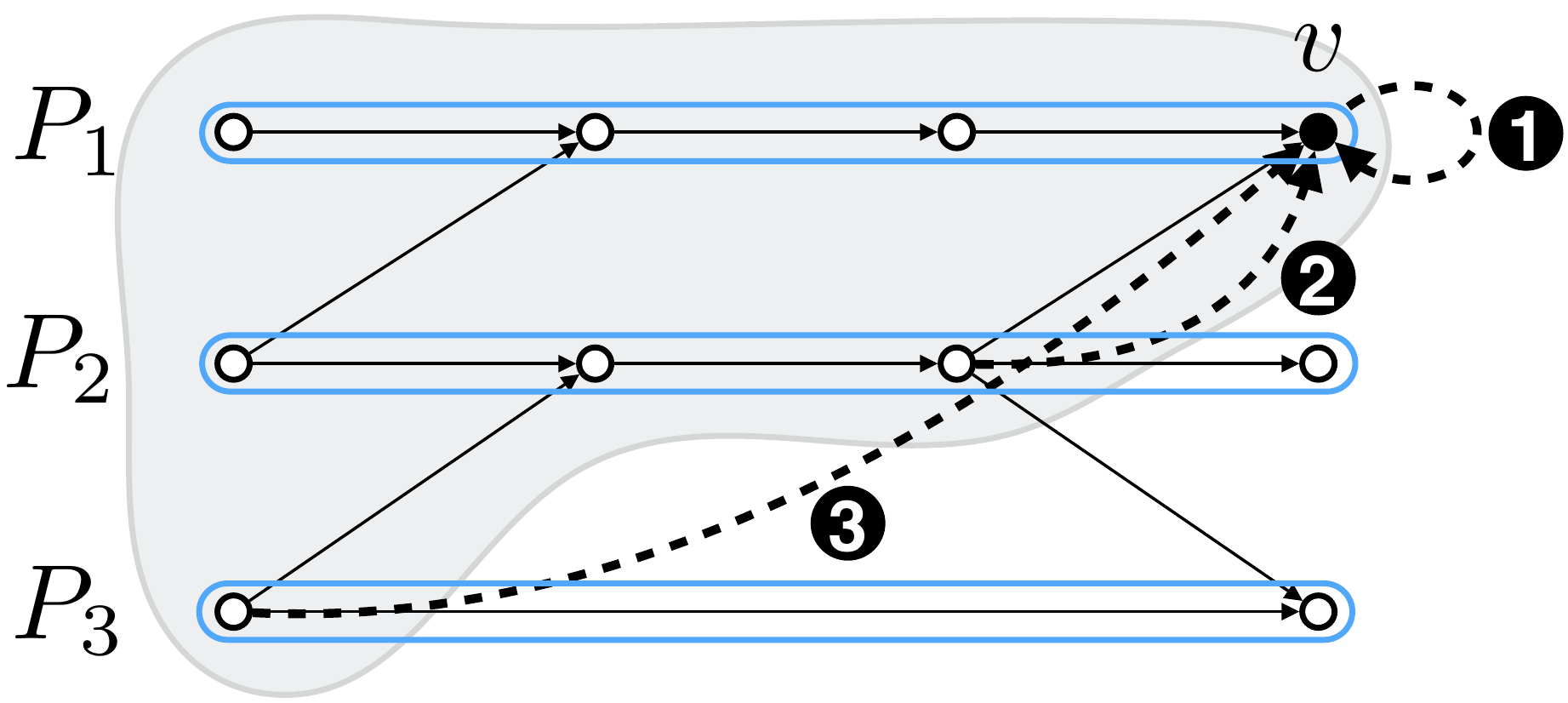}
    \caption{A path cover $P_1,P_2,P_3$ of a DAG. The forward links entering $v$ from $last2reach[v,i]$ are shown with dotted black lines, for $i\in\{1,2,3\}$. We mark in gray the set $R^-(v)$ of nodes that reach $v$.\label{fig:propagation-links}}
\end{figure}

\begin{observation}
Let $P_1,\dots,P_K$ be a path cover of a DAG $G$, and let $v \in V(G)$. Let $R_i$ denote the set of nodes of $P_i$ from its beginning until $\lasttoreach[v,i]$ inclusively (or the empty set, if $\lasttoreach[v,i]$ does not exist). Then $R^-(v) = \bigcup_{i=1}^{K}R_i$.
\end{observation}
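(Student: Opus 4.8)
The plan is to prove the two inclusions $R^-(v) \supseteq \bigcup_{i=1}^K R_i$ and $R^-(v) \subseteq \bigcup_{i=1}^K R_i$ separately; both are essentially immediate once the definition of $\lasttoreach$ is unpacked, so the main work is just being careful about the corner case where $\lasttoreach[v,i]$ does not exist.

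For the inclusion $\bigcup_i R_i \subseteq R^-(v)$: fix $i$ and suppose $R_i \neq \emptyset$, and let $u \in R_i$. By definition $u$ lies on the subpath of $P_i$ running from the startpoint of $P_i$ up to and including $\lasttoreach[v,i]$. Since $\lasttoreach[v,i]$ reaches $v$ (that is what it means to be ``the last node on $P_i$ that reaches $v$''), and since $u$ appears no later than $\lasttoreach[v,i]$ on the path $P_i$, the node $u$ reaches $\lasttoreach[v,i]$ by following $P_i$, hence $u$ reaches $v$ by composing the two paths (reachability is transitive, and a possibly empty path is allowed). Therefore $u \in R^-(v)$. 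As $i$ was arbitrary, $\bigcup_i R_i \subseteq R^-(v)$.

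For the reverse inclusion $R^-(v) \subseteq \bigcup_i R_i$: let $u \in R^-(v)$. Because $P_1,\dots,P_K$ is a path cover of $G$, the node $u$ belongs to some path $P_i$. Since $u$ reaches $v$, the set of nodes of $P_i$ that reach $v$ is nonempty (it contains $u$), so in particular $\lasttoreach[v,i]$ exists. Let $w = \lasttoreach[v,i]$, i.e.\ the last node on $P_i$ (in the path's order) that reaches $v$. I claim $u$ occurs no later than $w$ on $P_i$: otherwise $u$ would be a node of $P_i$ strictly after $w$ that reaches $v$, contradicting the maximality in the definition of $\lasttoreach[v,i]$. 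Hence $u$ lies on the prefix of $P_i$ ending at $w$, which is exactly $R_i$, so $u \in R_i \subseteq \bigcup_i R_i$.

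Combining the two inclusions gives $R^-(v) = \bigcup_{i=1}^K R_i$, as claimed. There is no real obstacle here; the only thing to watch is the bookkeeping around paths $P_i$ for which no node reaches $v$ — for those, $R_i = \emptyset$ by convention and they contribute nothing to the union, which is consistent since in the second inclusion every $u \in R^-(v)$ is captured by a path on which $\lasttoreach[v,i]$ genuinely exists. One should also state explicitly that reachability is reflexive and transitive (the excerpt allows empty paths, so $v$ itself reaches $v$), since both directions use transitivity and the first direction implicitly uses that $\lasttoreach[v,i]$ may equal $v$.
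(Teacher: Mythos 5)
Your proof is correct and follows essentially the same route as the paper's: the forward inclusion via transitivity of reachability along the prefix of $P_i$, and the reverse inclusion by covering $u$ with some $P_i$ and invoking the maximality of $\lasttoreach[v,i]$. You merely spell out the forward inclusion that the paper dismisses as clear, which is fine.
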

\begin{proof}
It is clear that $\bigcup_{i=1}^{K} R_i \subseteq R^-(v)$. To show the reverse inclusion, consider a node $u \in R^-(v)$. Since $P_1,\dots,P_K$ is a path cover, then $u$ appears on some $P_i$. Since $u$ reaches $v$, then $u$ appears on $P_i$ before $\lasttoreach[v,i]$, or $u = \lasttoreach[v,i]$. Therefore $u$ appears on $R_i$, as desired.
\qed
\end{proof}

This allows us to identify, for every node $u$, a set of \emph{forward propagation links} $\forward[u]$, where $(v,i) \in \forward[u]$ holds for any node $v$ and index $i$ with $\lasttoreach[v,i] = u$. These propagation links are the second ingredient in the correctness of the decomposition. Once we have computed the correct value at $u$, we update the corresponding data structures $\mathcal{T}_i$ for all paths $i$ to which $u$ belongs. We also propagate the query value of $\mathcal{T}_i$ in the decomposition (\ref{eq:decomposition}) for all nodes $v$ with $(v,i) \in \forward[u]$. This means that when we come to process $v$, we have already correctly computed all terms in the decomposition (\ref{eq:decomposition}) and it suffices to apply the operation $\bigoplus$ to these terms.

The next lemma shows how to compute the values $\lasttoreach$ (and, as a consequence, all forward propagation links), also by dynamic programming.

\begin{lemma}
\label{lemma:last2reach}
Let $G = (V,E)$ be a DAG, and let $P_1,\dots,P_K$ be a path cover of $G$. For every $v \in V$ and every $i \in [1..K]$, we can compute $\lasttoreach[v,i]$ in overall time $O(K|E|)$.
\end{lemma}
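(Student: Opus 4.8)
The plan is to compute all $K|V|$ values $\lasttoreach[v,i]$ by a single dynamic programming sweep over the nodes in topological order, in the same spirit as Lemma~\ref{lemma:approx-path-cover}. First I would fix the representation: store $\lasttoreach[v,i]$ as the topological index of the node in question, with a sentinel $\nil$ (treated as $-\infty$) when no node of $P_i$ reaches $v$. Since each $P_i$ is a directed path and edges go forward in topological order, the nodes of $P_i$ occur in strictly increasing topological order, so ``the last node on $P_i$ reaching $v$'' is exactly ``the node of $P_i$ of maximum topological index among those reaching $v$'', and each comparison costs $O(1)$.

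The structural fact I would prove next — essentially the content of the preceding observation, phrased per path — is that for any node $w$ and any path $P_i$, the set of nodes of $P_i$ that reach $w$ is a (possibly empty) prefix of $P_i$: if $u$ lies on $P_i$ and reaches $w$, then every node preceding $u$ on $P_i$ reaches $u$ along $P_i$ and hence reaches $w$. Thus this prefix is fully determined by its last element $\lasttoreach[w,i]$. Since a node $u$ of $P_i$ reaches $v$ iff $u=v$ (and $v\in V(P_i)$) or $u$ reaches some $w\in N^-(v)$, the set of nodes of $P_i$ reaching $v$ is the union of $\{v\}\cap V(P_i)$ with the prefixes of $P_i$ reaching the in-neighbours of $v$; a union of prefixes of $P_i$ is its longest member, which yields the recurrence
\begin{equation*}
\lasttoreach[v,i] \;=\; \max\Big(\, \{\, v : v\in V(P_i)\,\} \;\cup\; \{\, \lasttoreach[w,i] : w\in N^-(v)\,\} \,\Big),
\end{equation*}
the maximum taken in topological order and equal to $\nil$ when the set is empty. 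Note that when $v\in V(P_i)$ the term $v$ automatically dominates, since every $w\in N^-(v)$, and hence every $\lasttoreach[w,i]$, has smaller topological index than $v$.

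To realise this in $O(K|E|)$ time I would first initialise, for each path $P_i=(u_1,u_2,\dots)$, the entries $\lasttoreach[u_j,i]\leftarrow u_j$ and all remaining entries to $\nil$; this costs $O(\sum_i |P_i|)=O(K|V|)=O(K|E|)$ using $|E|\geq |V|-1$. Then, scanning $v=1,\dots,|V|$ in topological order, for each $i\in[1..K]$ I fold the already-finalised values $\lasttoreach[w,i]$ over $w\in N^-(v)$ into the current (initialised) entry by taking the topological maximum. Because $v$ is processed only after all its in-neighbours, every value read is already correct, so an induction on the topological order establishes correctness via the recurrence above. The cost of this phase is $\sum_v K\,|N^-(v)| = K|E|$, giving $O(K|E|)$ overall; as a byproduct, inserting $(v,i)$ into $\forward[\lasttoreach[v,i]]$ for every pair with $\lasttoreach[v,i]\neq\nil$ costs an extra $O(K|V|)=O(K|E|)$, so all forward propagation links are obtained within the same bound.

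The one genuinely delicate point — and the step I expect to need the most care — is the prefix claim together with the reduction ``union of prefixes $=$ longest prefix'', which is what makes the recurrence local and hence evaluable in one topological pass; once that is nailed down, the remaining bookkeeping (consistent $\nil$ handling as $-\infty$, charging the work of entry $(v,i)$ to the in-edges of $v$, and the $|E|\geq|V|-1$ inequality that collapses $O(K|V|+K|E|)$ to $O(K|E|)$) is routine.
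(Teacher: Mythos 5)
Your proposal is correct and follows essentially the same route as the paper: a single sweep in topological order that sets $\lasttoreach[v,i]$ to $v$ itself when $v$ lies on $P_i$ and otherwise folds in $\max_{u\in N^-(v)}\lasttoreach[u,i]$, charging the work to the in-edges for the $O(K|E|)$ bound. The only differences are cosmetic (you index by topological position rather than by position within $P_i$, which is equivalent since each $P_i$ is increasing in topological order) plus an explicit justification of the prefix/union-of-prefixes fact that the paper leaves implicit.
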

\begin{proof}
For each $P_i$ and every node $v$ on $P_i$, let $\index[v,i]$ denote the position of $v$ in $P_i$ (say, starting from $1$). Our algorithm actually computes $\lasttoreach[v,i]$ as the index of this node in $P_i$. Initially, we set $\lasttoreach[v,i] = -1$ for all $v$ and $i$. At the end of the algorithm, $\lasttoreach[v,i] = -1$ will hold precisely for those nodes $v$ that cannot be reached by any node of $P_i$. We traverse the nodes in topological order. For every $i \in [1..K]$, we do as follows: if $v$ is on $P_i$, then we set $\lasttoreach[v,i] = \index[v,i]$. Otherwise, we compute by dynamic programming $\lasttoreach[v,i]$  as $\max_{u \in N^-(v)} \lasttoreach[u,i]$.
\qed
\end{proof}

An immediate application of Theorem~\ref{thm:MPC} and of the values $\lasttoreach[v,i]$ is for solving reachability queries (see Appendix~\ref{appendix:reachability}). Another simple application is an extension of the \emph{longest increasing subsequence (LIS)} problem to labeled DAGs (Appendix~\ref{appendix:LIS}). 

The LIS problem, the LCS problem of Section~\ref{sect:LCS}, as well as the CLC problem of Section~\ref{sect:colinearchaining} make use of the following standard data structure (see e.g.~\cite[p.20]{MBCT15}).

\begin{lemma}
\label{lemma:searchtree}
The following two operations can be supported with a balanced binary search tree $\mathcal{T}$ in time $O(\log n)$, where $n$ is the number of leaves in the tree.
\begin{itemize}
\item $\mathsf{update}(k,\mathtt{val})$: For the leaf $w$ with $\mathtt{key}(w)=k$, update $\mathtt{value}(w)=\mathtt{val}$.
\item $\mathsf{RMaxQ}(l,r)$: Return $\max_{w \wt l\leq \mathtt{key}(w) \leq r} \mathtt{value}(w)$ (\emph{Range Maximum Query}).
\end{itemize}
Moreover, the balanced binary search tree can be built in $O(n)$ time, given the $n$ pairs $(\mathtt{key},\mathtt{value})$ sorted by component $\mathtt{key}$. 
\end{lemma}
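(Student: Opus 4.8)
The plan is to build the classical augmented-BST range-tree data structure of the kind used for one-dimensional range maximum queries, and to verify that each of the three requested operations meets the stated time bound. First I would fix the set of $n$ keys that will ever be queried or updated; since the lemma promises the pairs $(\mathtt{key},\mathtt{value})$ sorted by $\mathtt{key}$, I can place them in the leaves, left to right, in sorted key order, forming a balanced binary search tree of height $O(\log n)$ in $O(n)$ time by the standard bottom-up linking of consecutive leaves into internal nodes. Each internal node $x$ stores the maximum $\mathtt{value}$ over the leaves in its subtree; these are filled in by one post-order pass, again $O(n)$ total. This establishes the final sentence of the statement.

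Next I would handle $\mathsf{update}(k,\mathtt{val})$. Walk from the root to the leaf $w$ with $\mathtt{key}(w)=k$, following the search-tree branching on $k$; this touches $O(\log n)$ nodes. Set $\mathtt{value}(w)=\mathtt{val}$, then walk back up to the root recomputing, at each ancestor $x$, its stored maximum as the max of the stored maxima of its two children. Only the $O(\log n)$ ancestors of $w$ can change, so the update is correct and runs in $O(\log n)$ time.

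For $\mathsf{RMaxQ}(l,r)$ I would use the standard canonical-decomposition argument: search down for $l$ and for $r$; the set of maximal subtrees entirely contained in the key-range $[l,r]$ hangs off the two root-to-leaf search paths, and there are $O(\log n)$ of them. Return the maximum of the precomputed subtree maxima of these canonical nodes (together with the endpoint leaves themselves when they fall in range), which is exactly $\max_{w\,:\,l\le \mathtt{key}(w)\le r}\mathtt{value}(w)$; the whole query visits $O(\log n)$ nodes. I would also note the degenerate case where no leaf has a key in $[l,r]$, in which case the query returns $-\infty$ (or an agreed sentinel), which is the natural value of a max over an empty set and is what the later applications rely on.

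I do not expect a genuine obstacle here: this is a textbook data structure, and the reference \cite[p.20]{MBCT15} is cited precisely so that the routine details can be omitted. The only point that deserves a sentence of care is that the key set is static — the tree is built once over all keys that will occur, and $\mathsf{update}$ only changes values, never the shape — so that balance and the $O(\log n)$ height are maintained for free without rebalancing; this matches how the structure is used in the LIS, LCS, and CLC algorithms, where the keys are known in advance (e.g. coordinates of anchors) and only the associated optimal values evolve during the sweep.
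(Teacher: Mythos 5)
Your proposal is correct and is exactly the standard augmented balanced-BST construction that the paper relies on by citation (it gives no proof of its own, deferring to \cite[p.20]{MBCT15}): leaves in sorted key order built bottom-up in $O(n)$ time, subtree maxima at internal nodes, root-to-leaf update with recomputation along the path, and the $O(\log n)$ canonical decomposition for range maximum queries. Your remark that the key set is static and only values change, so no rebalancing is ever needed, is precisely the property the LIS, LCS, and CLC algorithms depend on.
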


\section{The LCS problem \label{sect:LCS}}

Consider a labeled DAG $G=(V,E,\ell,\Sigma)$ and a sequence $S \in \Sigma^*$, where $\Sigma$ is an ordered alphabet. We say that the \emph{longest common subsequence (LCS)} between $G$ and $S$ is a longest subsequence $C$ of any path label in $G$ such that $C$ is also a subsequence of $S$. 

We will modify the LIS algorithm of Appendix~\ref{appendix:LIS} minimally to find a LCS between a DAG $G$ and a sequence $S$. 
The description is self-contained yet, for the interest of page limit, more dense than the LIS algorithm derivation. The purpose is to give an example of the general MPC-framework with fewer technical details than required in the main result of this paper concerning co-linear chaining.

For any $c \in \Sigma$, let $S(c)$ denote set $\{j \mid S[j]=c\}$. For each node $v$ and each $j\in S(\ell(v))$, we aim to store in $\mathtt{LLCS}[v,j]$ the length of the longest common subsequence between $S[1..j]$ and any label of path ending at $v$, among all subsequences having $\ell(v)=S[j]$ as the last symbol. 

Assume we have a path cover of size $K$ and $\forward[u]$ computed for all $u\in V$. Assume also we have mapped $\Sigma$ to $\{0,1,2,\ldots,|S|+1\}$ in $O((|V|+|S|) \log |S|)$ time (e.g. by sorting the symbols of $S$, binary searching labels of $V$, and then relabeling by ranks, with the exception that, if a node label does not appear in $S$, it is replaced by $|S|+1$). 

Let $\mathcal{T}_i$ be a search tree of Lemma~\ref{lemma:searchtree} initialized with key-value pairs $(0,0)$, $(1,-\infty)$, $(2,-\infty)$, \ldots, $(|S|,-\infty)$, for each $i \in [1..K]$. The algorithm proceeds in fixed topological ordering on $G$. At a node $u$, for every $(v,i) \in \forward[u]$ we now update an array $\mathtt{LLCS}[v,j]$ for all $j \in S(\ell(v))$ as follows: $\mathtt{LLCS}[v,j]=\max(\mathtt{LLCS}[v,j],\mathcal{T}_i.\mathsf{RMaxQ}(0,j-1)+1)$. The update step of $\mathcal{T}_i$ when the algorithm reaches a node $v$, for each covering path $i$ containing $v$, is done as $\mathcal{T}_{i}.\mathsf{update}(j',\mathtt{LLCS}[v,j'])$ for all $j'$ with $j'<j$ and $j' \in S(\ell(v))$. Initialization is handled by the $(0,0)$ key-value pair so that any $(v,j)$ with $\ell(v)=S[j]$ can start a new common subsequence. 

The final answer to the problem is $\max_{v\in V, j\in S(\ell(v))} \mathtt{LLCS}[v,j]$, with the actual LCS to be found with a standard traceback. The algorithm runs in $O((|V|+|S|)\log |S|+K|M| \log |S|)$ time, where $M=\{(v,j) \mid v \in V, j \in [1..|S|], \ell(v)=S[j]\}$, and assuming a cover of $K$ paths is given. \alex{Notice that $|M|$ can be $\Omega(|V||S|)$.} With Theorem~\ref{thm:MPC} plugged in, the total running time becomes $O(k|E| \log |V| + (|V|+|S|)\log |S|+k|M| \log |S|)$. Since the queries on the data structures are semi-open, one can use the more efficient data structure from ~\cite{Gabow:1984:SRT:800057.808675} to improve the bound to $O(k|E| \log |V| + (|V|+|S|)\log |S|+k|M| \log \log |S|)$. The following theorem summarizes this result.

\begin{theorem}
Let $G = (V,E,\ell,\Sigma)$ be a labeled DAG of width $k$, and let $S \in \Sigma^\ast$, where $\Sigma$ is an ordered alphabet. We can find a longest common subsequence between $G$ and $S$ in time $O(k|E| \log |V| + (|V|+|S|)\log |S|+k|M| \log \log |S|)$.
\end{theorem}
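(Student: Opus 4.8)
The plan is to phrase the problem as a sparse dynamic program over the set of \emph{matches} $M=\{(v,j) : v\in V,\ j\in[1..|S|],\ \ell(v)=S[j]\}$ and then run it through the path-cover framework of Section~\ref{sec:framework}. First I would fix the semantics: for a match $(v,j)$, let $\mathtt{LLCS}[v,j]$ be the length of a longest common subsequence of $S[1..j]$ and the label of some path of $G$ ending at $v$, among those common subsequences whose last matched symbol is $\ell(v)=S[j]$. Since every path of a DAG visits each node at most once, such a subsequence is either the single match $(v,j)$ or is obtained from a longer-prefix one ending at a match $(u,j')$ with $u\in R^-(v)\setminus\{v\}$ and $j'<j$; a standard exchange argument then gives
\[
\mathtt{LLCS}[v,j] \;=\; 1+\max\Bigl(0,\ \max\{\,\mathtt{LLCS}[u,j']\mid u\in R^-(v)\setminus\{v\},\ j'\in S(\ell(u)),\ j'<j\,\}\Bigr),
\]
the $0$ being the base case realised by a fresh one-symbol subsequence. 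The answer is $\max_{(v,j)\in M}\mathtt{LLCS}[v,j]$, and the subsequence itself is recovered by storing an argmax back-pointer at each match.

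Second, I would observe that the inner maximum is exactly of the decomposable form (\ref{eq:decomposition}). Compute a minimum path cover $P_1,\dots,P_k$ (the exact one of Theorem~\ref{thm:MPC}, not the approximate cover, since the number of paths multiplies the $|M|$ term) in time $O(k|E|\log|V|)$, and all $\lasttoreach[v,i]$ and forward links $\forward[\cdot]$ in time $O(k|E|)$ by Lemma~\ref{lemma:last2reach}. Letting $R_i$ be the prefix of $P_i$ up to $\lasttoreach[v,i]$, the Observation gives $R^-(v)=\bigcup_i R_i$, so the inner maximum splits into $\max_i \max\{\mathtt{LLCS}[u,j']\mid u\in R_i,\ j'\in S(\ell(u)),\ j'<j\}$, with the $u=v$ term to be suppressed. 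I keep one search tree $\mathcal{T}_i$ of Lemma~\ref{lemma:searchtree} per cover path, keyed on the positions $0,1,\dots,|S|$, holding for the nodes of $P_i$ processed so far the best $\mathtt{LLCS}$ value seen at each position (dummy leaf $(0,0)$, all others $-\infty$); then the inner maximum over $R_i$ with $j'<j$ is one call $\mathcal{T}_i.\mathsf{RMaxQ}(0,j-1)$. Processing the nodes in topological order, at node $u$ I fold $\mathcal{T}_i.\mathsf{RMaxQ}(0,j-1)+1$ into $\mathtt{LLCS}[v,j]$ for every $(v,i)\in\forward[u]$ and every $j\in S(\ell(v))$, and I feed the completed row of $v$ into the trees $\mathcal{T}_i$ for the paths $i$ through $v$.

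The step I expect to need the most care is the bookkeeping at a single node, precisely because of the self-link $(v,i)$ present whenever $v\in P_i$ and because a symbol may recur in $S$. For the self-link I must guarantee that $\mathtt{LLCS}[v,j]$ is folded from $\mathcal{T}_i$ while the tree still reflects only the nodes of $P_i$ strictly before $v$ — so that $\bigcup_i$ of the sets actually queried is $R^-(v)\setminus\{v\}$ rather than $R^-(v)$, and $v$ is not illegally reused; concretely, all folds into the row of $v$ must precede all insertions of that row. Inserting the whole row $\{(j,\mathtt{LLCS}[v,j]):j\in S(\ell(v))\}$ at once, only after it is fully computed, and letting the query window $(0,j-1)$ enforce $j'<j$, handles the multiplicity of $\ell(v)$ in $S$ for genuine later predecessors. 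Correctness then follows from the Observation together with the invariant that, when $\mathtt{LLCS}[v,j]$ is finalised, each term of the decomposition (\ref{eq:decomposition}) has already been contributed.

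Finally I would assemble the running time. Mapping $\Sigma$ to ranks in $\{0,\dots,|S|+1\}$ costs $O((|V|+|S|)\log|S|)$; building the $k$ trees costs $O(k|S|)$, absorbed by the other terms; Theorem~\ref{thm:MPC} and Lemma~\ref{lemma:last2reach} cost $O(k|E|\log|V|)$. In the main loop each match $(v,j)$ is queried at most once per cover path (when $u=\lasttoreach[v,i]$ is reached) and inserted at most once per cover path through $v$; since a node lies on at most $k$ of the $k$ paths, there are $O(k|M|)$ tree operations, and with the semi-open-range structure of Gabow~\cite{Gabow:1984:SRT:800057.808675} each costs $O(\log\log|S|)$, giving $O(k|M|\log\log|S|)$. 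Summing yields the claimed $O(k|E|\log|V|+(|V|+|S|)\log|S|+k|M|\log\log|S|)$.
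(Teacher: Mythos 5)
Your proposal is correct and follows essentially the same route as the paper: the same table $\mathtt{LLCS}[v,j]$ over matches, the same decomposition of $R^-(v)$ via a minimum path cover and forward propagation links, one range-maximum search tree per cover path queried on $[0..j-1]$, and the same accounting giving $O(k|E|\log|V|+(|V|+|S|)\log|S|+k|M|\log\log|S|)$ with Gabow et al.'s structure. Your explicit handling of the self-link $(v,i)\in\forward[v]$ — folding all queries into $v$'s row before inserting that row, so a node cannot be matched twice when $\ell(v)$ recurs in $S$ — is a correctness detail the paper leaves implicit, and it is resolved the right way.
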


When $G$ is a path, the bound improves to $O((|V|+|S|)\log |S|+|M|\log\log |S|)$, which nearly matches the fastest sparse dynamic programming algorithm for the LCS on two sequences \cite{EGRI92} (with a difference in $\log \log $-factor due to a different data structure, which does not work for this order of computation). 

\section{Co-linear chaining\label{sect:colinearchaining}}

We start with a formal definition of the co-linear chaining problem (see Figure~\ref{fig:colinearchainingoutput} for an illustration), following the notions introduced in~\cite[Section 15.4]{MBCT15}.

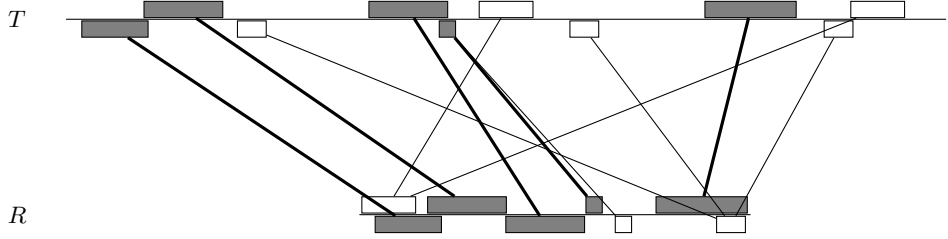
\begin{figure}[t]
\begin{center}
\begin{tikzpicture}[scale=1.3]
\node[circle,draw=none,fill=none] (T) at (-0.5,1) {$T$};
\node[circle,draw=none,fill=none] (R) at (-0.5,-1) {$R$};
\draw (0,1) -- (9,1);
\draw (3,-1) -- (7,-1);
\node[rectangle,draw=black,fill=gray] (A) at (0.5,0.9) {\verb+     +};
\node[rectangle,draw=black,fill=gray] (C) at (1.2,1.1) {\verb+      +};
\node[rectangle,draw=black,fill=none] (D) at (1.9,0.9) {\verb+  +};
\node[rectangle,draw=black,fill=gray] (E) at (3.5,1.1) {\verb+      +};
\node[rectangle,draw=black,fill=gray] (F) at (3.9,0.9) {\verb+ +};
\node[rectangle,draw=black,fill=none] (G) at (4.5,1.1) {\verb+    +};
\node[rectangle,draw=black,fill=none] (H) at (5.3,0.9) {\verb+  +};
\node[rectangle,draw=black,fill=gray] (I) at (7.0,1.1) {\verb+       +};
\node[rectangle,draw=black,fill=none] (J) at (7.9,0.9) {\verb+  +};
\node[rectangle,draw=black,fill=none] (K) at (8.3,1.1) {\verb+    +};

\node[rectangle,draw=black,fill=none] (RKG) at (3.3,-0.9) {\verb+    +};
\node[rectangle,draw=black,fill=gray] (RA) at (3.5,-1.1) {\verb+     +};
\node[rectangle,draw=black,fill=gray] (RC) at (4.1,-0.9) {\verb+      +};
\node[rectangle,draw=black,fill=gray] (RE) at (4.9,-1.1) {\verb+      +};
\node[rectangle,draw=black,fill=gray] (RF) at (5.4,-0.9) {\verb+ +};
\node[rectangle,draw=black,fill=none] (RL) at (5.7,-1.1) {\verb+ +};
\node[rectangle,draw=black,fill=gray] (RI) at (6.5,-0.9) {\verb+       +};
\node[rectangle,draw=black,fill=none] (RM) at (6.8,-1.1) {\verb+  +};

\draw[very thick] (A) to (RA);
\draw[very thick] (C) to (RC);
\draw[very thick] (E) to (RE);
\draw[very thick] (F) to (RF);
\draw[very thick] (I) to (RI);
\draw (D) to (RM);
\draw (F) to (RL);
\draw (H) to (RM);
\draw (J) to (RM);
\draw (K) to (RKG);
\draw (G) to (RKG);

\end{tikzpicture}
\caption{\alex{In the co-linear chaining problem between two sequences $T$ and $R$, we need to find a subset of pairs of intervals (i.e., anchors) so that (i) the selected intervals in each sequence appear in increasing order; and (ii) the selected intervals cover in $R$ the maximum amount of positions. The figure shows an input for the problem, and highlights in gray an optimal subset of anchors. Figure taken from~\cite{MBCT15}.\label{fig:colinearchainingoutput}}}
\end{center}
\end{figure}  

\begin{problem}[Co-linear chaining (CLC)\label{prob:colinearchaining}]
Let $T$ and $R$ be two sequences over an alphabet $\Sigma$, and let $M$ be a set of $N$ pairs $([x \ltdots y],[c \ltdots d])$. Find an ordered subset $S=s_1 s_2 \cdots s_p$ of pairs from $M$ such that
\begin{itemize}
\item $s_{j-1}.y<s_{j}.y$ and $s_{j-1}.d<s_{j}.d$, for all $1\leq j \leq p$, and
\item $S$ maximizes the \emph{ordered} coverage of $R$, defined as \[\mathtt{coverage}(R,S)=|\{i \in [1\ltdots|R|] \st i \in [s_{j}.c \ltdots s_{j}.d] \text{ for some } 1\leq j \leq p\}|.\]
\end{itemize}
\ \vspace{-.4cm}
\end{problem}


The definition of ordered coverage between two sequences is symmetric, as we can simply exchange the roles of $T$ and $R$. But when solving the CLC problem between a DAG and a sequence, we must choose whether we want to maximize the ordered coverage on the sequence $R$ or on the DAG $G$. We will consider the former variant.

First, we define the following \emph{precedence relation}:

\begin{definition}
Given two paths $P_1$ and $P_2$ in a DAG $G$, we say that $P_1$ \emph{precedes} $P_2$, and write $P_1 \prec P_2$, if one of the following conditions holds:
\begin{itemize}
\item $P_1$ and $P_2$ do not share nodes and there is a path in $G$ from the endpoint of $P_1$ to the startpoint of $P_2$, or
\item $P_1$ and $P_2$ have a suffix-prefix overlap and $P_2$ is not fully contained in $P_1$; that is, if $P_1 = (a_1,\dots,a_i)$ and $P_2 = (b_1,\dots,b_j)$ then there exists a $k \in \{\max(1,2+i-j),\dots,i\}$ such that $a_k = b_1$, $a_{k+1} = b_2$, \dots, $a_{i} = b_{1+i-k}$.
\end{itemize}
\end{definition}
We then extend the formulation of Problem~\ref{prob:colinearchaining} to handle a sequence and a DAG.

\begin{problem}[CLC between a sequence and a DAG\label{prob:colinearchainingDAG}]
Let $R$ be a sequence, let $G$ be a labeled DAG, and let $M$ be a set of $N$ pairs $(P,[c \ltdots d])$, where $P$ is a path in $G$ and $c \leq d$ are non-negative integers. Find an ordered subset $S=s_1 s_2 \cdots s_p$ of pairs from $M$ such that
\begin{itemize}
\item for all $2 \leq j \leq p$, it holds that $s_{j-1}.P \prec s_{j}.P$ and $s_{j-1}.d < s_{j}.d$, and
\item \begin{sloppypar}
$S$ maximizes the \emph{ordered} coverage of $R$, analogously defined as $\mathtt{coverage}(R,S)=|\{i \in [1\ltdots|R|] \st i \in [s_{j}.c \ltdots s_{j}.d] \text{ for some } 1\leq j \leq p\}|$.
\end{sloppypar}
\end{itemize}
\ \vspace{-.4cm}
\end{problem}

To illustrate the main technique of this paper, let us for now only seek solutions where paths in consecutive pairs in a solution do not overlap in the DAG. Suffix-prefix overlaps between paths turn out to be challenging; we will postpone this case until Appendix~\ref{sect:pathoverlaps}. 

\begin{problem}[Overlap-limited CLC between a sequence and a DAG\label{prob:colinearchainingDAGlimited}]
Let $R$ be a sequence, let $G$ be a labeled DAG, and let $M$ be a set of $N$ pairs $(P,[c \ltdots d])$, where $P$ is a path in $G$ and $c \leq d$ are non-negative integers (with the interpretation that $\ell(P)$ matches $R[c \ltdots d]$). Find an ordered subset $S=s_1 s_2 \cdots s_p$ of pairs from $M$ such that
\begin{itemize}
\item for all $2 \leq j \leq p$, it holds that there is a non-empty path from the last node of $s_{j-1}.P$ to the first node of $s_{j}.P$ and $s_{j-1}.d < s_{j}.d$, and
\item $S$ maximizes $\mathtt{coverage}(R,S)$. 
\end{itemize}
\ \vspace{-.4cm}
\end{problem}





First, let us consider a trivial approach to solve Problem~\ref{prob:colinearchainingDAGlimited}. Assume we have ordered in $O(|E| + N)$ time the $N$ input pairs as $M[1],M[2],\dots, M[N]$, so that the endpoints of $M[1].P, M[2].P,\dots,M[N].P$ are in topological order, breaking ties arbitrarily. We denote by $C[j]$ the maximum ordered coverage of $R[1 \ltdots M[j].d]$ using the pair $M[j]$ and any subset of pairs from $\{M[1],M[2],\dots, M[j-1]\}$.

\begin{theorem}
\label{thm:colinearchainingDAGtrivial}
Overlap-limited co-linear chaining between a sequence and a labeled DAG $G=(V,E,\ell,\Sigma)$ (Problem~\ref{prob:colinearchainingDAGlimited}) on $N$ input pairs can be solved in $O((|V| + |E|) N)$ time.
\end{theorem}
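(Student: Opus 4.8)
The plan is to compute the values $C[j]$ by dynamic programming in the fixed topological order of the endpoints of the anchor paths, and to show that each $C[j]$ can be evaluated with a single traversal of the graph. First I would fix the ordering $M[1],\dots,M[N]$ by endpoint in topological order, as stated. The recurrence should say that $C[j]$ equals $M[j].d - M[j].c + 1$ (the coverage contributed by $M[j]$ alone) plus the best possible coverage extendable from some earlier admissible pair $M[j']$; admissibility of $M[j']$ means $M[j'].d < M[j].d$ and there is a non-empty path from the last node of $M[j'].P$ to the first node of $M[j].P$. As in the standard two-sequence CLC recurrence of \cite{MBCT15}, one has to split according to whether the interval $[M[j'].c..M[j'].d]$ overlaps $[M[j].c..M[j].d]$ in $R$: in the non-overlapping case the contribution is $C[j'] + (M[j].d - M[j].c + 1)$, and in the overlapping case it is $C[j'] + (M[j].d - M[j'].d)$ (so as not to double-count the positions of $R$ already covered by $M[j']$). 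Taking the maximum over all admissible $j'$, and separately allowing $M[j]$ to start a fresh chain, gives $C[j]$; the final answer is $\max_j C[j]$.

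The key observation that makes the trivial bound work is that reachability from anchor endpoints to anchor startpoints can be resolved one anchor at a time by a graph traversal. Concretely, to compute $C[j]$ I would process the anchors in the chosen order; when we reach $M[j]$, we need, for each node $u$, the best value over all already-processed anchors whose path ends at a node reaching $u$ — but here it suffices to do the following: maintain, for every node $w \in V$, a value that aggregates (via the two cases above) the contributions of all anchors $M[j']$ processed so far whose endpoint is $w$; then a single backward-or-forward pass of the DAG propagates, to the startpoint of $M[j].P$, the best achievable value among all $w$ that reach it. Since each such propagation pass costs $O(|V| + |E|)$ and is performed once per anchor, and the two per-anchor case distinctions cost $O(1)$ each, the total is $O((|V|+|E|)N)$. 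One has to be a little careful about the "non-empty path" requirement (so that $M[j]$ cannot chain to itself trivially through a shared endpoint) and about processing anchors with equal endpoints in a consistent way, but these are handled by processing strictly-earlier anchors and by the strict inequality $s_{j-1}.d < s_j.d$.

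The main obstacle — really the only subtle point — is arguing correctness of the overlap case of the recurrence together with the claim that $C[j]$ as defined (best over a single predecessor $M[j']$) actually yields the optimum over an arbitrary-length chain. This is exactly the invariant used in the classical algorithm: because $C[j']$ already encodes the best ordered coverage of $R[1..M[j'].d]$ ending with $M[j']$, appending $M[j]$ need only account for the incremental coverage of $R$ beyond position $M[j'].d$ (when the intervals overlap) or the full interval of $M[j]$ (when they do not), and the constraint $M[j'].d < M[j].d$ guarantees we are genuinely extending to the right in $R$; the precedence constraint in the DAG is enforced by the reachability test, and transitivity of reachability guarantees that the chain recorded by the traceback is valid. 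I would therefore prove, by induction on $j$ in the processing order, that $C[j]$ equals the optimum ordered coverage among all chains from $M$ ending with $M[j]$, and conclude. The running-time accounting is then immediate from the description above.
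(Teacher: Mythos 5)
Your proposal is correct and follows essentially the same route as the paper: order the anchors by the topological position of their path endpoints, perform one $O(|V|+|E|)$ reverse-reachability traversal from the startpoint of $M[j].P$ per anchor (the paper does this by reversing the edges and running a DFS), and combine the reachable predecessors via the standard two-case (disjoint vs.\ overlapping in $R$) coverage recurrence of Abouelhoda, yielding $O((|V|+|E|)N)$ overall. The only caution is that the per-node ``aggregated value'' cannot be a single precomputed number, since the case split depends on $M[j].c$ and $M[j].d$; as your $O(1)$-per-anchor accounting already suggests, each anchor ending at an encountered node must be examined individually during the traversal for $M[j]$, exactly as in the paper's proof.
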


\begin{proof}
First, we reverse the edges of $G$. Then we mark the nodes that correspond to the path endpoints for every pair. After this preprocessing we can start computing the maximum ordered coverage for the pairs as follows: for every pair $M[j]$ in topological order of their path endpoints for $j \in \{1,\dots,N\}$ we do a depth-first traversal starting at the startpoint of path $M[j].P$. Note that since the edges are reversed, the depth-first traversal checks only pairs whose paths are predecessors of $M[j].P$. 

Whenever we encounter a node that corresponds to the path endpoint of a pair $M[j']$, we first examine whether it fulfills the criterion $M[j'].d < M[j].c$ (call this case (a)). 
The best ordered coverage using pair $M[j]$ after all such $M[j']$ is then
\begin{equation}
C^\textrm{a}[j]=\max_{j' \wt M[j'].d<M[j].c} \{ C[j']+(M[j].d-M[j].c+1) \}, \label{eq:case-a}
\end{equation} 
where $C[j]'$ is the best ordered coverage when using pairs $M[j']$ last.

If pair $M[j']$ does not fulfill the criterion for case (a), we then check whether $M[j].c \leq M[j'].d \leq M[j].d$ (call this case (b)).
The best ordered coverage using pair $M[j]$ after all such $M[j']$ with $M[j'].c < M[j].c$ is then
\begin{equation}
C^\textrm{b}[j]=\max_{j' \wt M[j].c\leq M[j'].d\leq M[j].d} \{C[j']+(M[j].d-M[j'].d)\}.\label{eq:case-b}
\end{equation}
Inclusions, i.e. $M[j].c \leq M[j'].c$, can be left computed incorrectly in $C^\textrm{b}[j]$, since there is a better or equally good solution computed in $C^\textrm{a}[j]$ or $C^\textrm{b}[j]$ that does not use them~\cite{Abo07}. 

Finally, we take $C[j]=\max(C^\textrm{a}[j],C^\textrm{b}[j])$. Depth-first traversal takes $O(|V|+|E|)$ time and is executed $N$ times, for $O((|V| + |E|) N)$ total time.
\qed
\end{proof}



However, we can do significantly better than $O((|V| + |E|) N)$ time. In the next sections we will describe how to apply the framework from Section~\ref{sec:framework} here. 

\subsection{Co-linear chaining on sequences revisited}
 
We now describe the dynamic programming algorithm from~\cite{Abo07} for the case of two sequences, as we will then reuse this same algorithm in our MPC approach.

First, sort input pairs in $M$ by the coordinate $y$ into the sequence $M[1]$, $M[2]$, \ldots, $M[N]$, so that $M[i].y\leq M[j].y$ holds for all $i<j$. This will ensure that we consider the overlapping ranges in sequence $T$ in the correct order.  Then, we fill a table $C[1\ltdots N]$ analogous to that of Theorem~\ref{thm:colinearchainingDAGtrivial} so that $C[j]$ gives the maximum ordered coverage of $R[1 \ltdots M[j].d]$ using the pair $M[j]$ and any subset of pairs from $\{M[1],M[2],\dots, M[j-1]\}$. Hence, $\max_j C[j]$ gives the total 
maximum ordered coverage of $R$. 

Consider Equations~(\ref{eq:case-a})~and~(\ref{eq:case-b}). 
Now we can use an \emph{invariant technique} to convert these recurrence relations so that we can exploit the range maximum queries of Lemma~\ref{lemma:searchtree}:
\begin{eqnarray*}
C^\mathtt{a}[j]&=&(M[j].d-M[j].c+1) +\max_{j' \wt M[j'].d<M[j].c} C[j']\\
&=& (M[j].d-M[j].c+1)+\mathcal{T}.\mathsf{RMaxQ}(0,M[j].c-1),\\ 
C^\mathtt{b}[j]&=&M[j].d +\max_{j' \wt M[j].c\leq M[j'].d\leq M[j].d} \{C[j']-M[j'].d\} \\
&=&M[j].d+\mathcal{I}.\mathsf{RMaxQ}(M[j].c,M[j].d), \\
C[j]&=&\max(C^\mathtt{a}[j],C^\mathtt{b}[j]).
\end{eqnarray*}
For these to work correctly, we need to have properly updated the trees $\mathcal{T}$ and $\mathcal{I}$ for all $j' \in [1\ltdots j-1]$. That is, we need to call $\mathcal{T}. \mathsf{update}(M[j'].d,C[j'])$ and $\mathcal{I}.\mathsf{update}(M[j'].d,C[j']-M[j'].d)$ after computing each $C[j']$. The running time is $O(N \log N)$. 

Figure~\ref{fig:colinearchainingoutput} illustrates the optimal chain on our schematic example. This chain can be extracted by modifying the algorithm to store traceback pointers. 

\begin{theorem}[\cite{SK03,Abo07}]
\label{thm:colinearchaining}
Problem~\ref{prob:colinearchaining} on $N$ input pairs can be solved in the optimal $O(N \log N)$ time.
\end{theorem}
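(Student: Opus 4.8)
The plan is to establish the two halves of Theorem~\ref{thm:colinearchaining} separately: correctness of the dynamic program, and then the $O(N \log N)$ running time, with a brief remark on optimality. Most of the work has already been done in the text preceding the statement, so the proof is essentially a matter of assembling the pieces and verifying the invariants. First I would recall the setup: input pairs are sorted by the $y$-coordinate so that $M[i].y \le M[j].y$ for $i<j$, the table $C[1..N]$ is defined exactly as in Theorem~\ref{thm:colinearchainingDAGtrivial}, and $\max_j C[j]$ is the answer. The correctness argument then has two parts. (1) For each $j$, $C[j] = \max(C^{\mathtt{a}}[j], C^{\mathtt{b}}[j])$ correctly computes the best ordered coverage of $R[1..M[j].d]$ ending with pair $M[j]$; this is inherited from the recurrences~(\ref{eq:case-a}) and~(\ref{eq:case-b}), together with the observation (citing~\cite{Abo07}) that inclusions $M[j].c \le M[j'].c$ may safely be computed incorrectly in $C^{\mathtt{b}}[j]$ because an at-least-as-good chain that avoids them is captured by $C^{\mathtt{a}}$ or $C^{\mathtt{b}}$. (2) The invariant-technique rewriting into $\mathcal{T}.\mathsf{RMaxQ}(0,M[j].c-1)$ and $\mathcal{I}.\mathsf{RMaxQ}(M[j].c,M[j].d)$ returns the same values, provided $\mathcal{T}$ and $\mathcal{I}$ have been updated with the pairs $(M[j'].d, C[j'])$ and $(M[j'].d, C[j']-M[j'].d)$ for exactly those $j' < j$.

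The key point to verify carefully is that processing the pairs in the $y$-sorted order makes this last proviso hold: when we reach index $j$, every $j' < j$ has already been processed and hence both trees have been updated at key $M[j'].d$, while no $j'' \ge j$ has been inserted yet. This is where the choice of sort key matters — sorting by $y$ is what guarantees that the set of keys currently present in $\mathcal{T}$ and $\mathcal{I}$ is precisely $\{M[j'].d : j' < j\}$, which is exactly the index set ranged over in~(\ref{eq:case-a}) and~(\ref{eq:case-b}). I would state this as the single substantive lemma of the proof and dispatch it by induction on $j$: the base case $j=1$ has both trees holding only their initializing entries, and the inductive step follows because the update of $\mathcal{T}$ and $\mathcal{I}$ for $j'=j-1$ is performed immediately after $C[j-1]$ is computed and before $C[j]$ is queried. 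One minor subtlety worth a sentence: ties $M[i].y = M[j].y$ are broken arbitrarily, and one should check that this does not affect correctness — it does not, because the strict inequality $s_{j-1}.d < s_j.d$ in the chain constraint is enforced through the $d$-coordinate queries, not through the processing order, and~(\ref{eq:case-a})--(\ref{eq:case-b}) only ever reference $j' < j$, never $j' = j$.

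For the running time: the initial sort costs $O(N \log N)$; the two search trees of Lemma~\ref{lemma:searchtree} are built in $O(N)$ time on the sorted keys; and the main loop performs, for each of the $N$ indices, a constant number of $\mathsf{RMaxQ}$ and $\mathsf{update}$ operations, each $O(\log N)$ by Lemma~\ref{lemma:searchtree}. This gives $O(N \log N)$ overall. Traceback pointers stored during the computation let us recover the optimal chain $S$ itself within the same bound, as noted before the statement. Finally, optimality of $O(N \log N)$ follows from a standard reduction: sorting $N$ numbers reduces to an instance of co-linear chaining (e.g., encode each number as a pair forcing a total order), so any $o(N \log N)$ algorithm in the comparison model would beat the sorting lower bound; I would cite~\cite{SK03,Abo07} for this and not belabor it.

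I do not expect a genuine obstacle here — the heavy lifting (the invariant reformulation and the inclusion-safety argument) is already in the text and attributed to~\cite{Abo07}. The one place demanding care, and the place I would spend the most words, is making the bookkeeping claim precise: that the $y$-sorted processing order synchronizes the contents of $\mathcal{T}$ and $\mathcal{I}$ with the index ranges of~(\ref{eq:case-a}) and~(\ref{eq:case-b}) at every step. Everything else is routine.
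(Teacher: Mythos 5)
Your proposal matches the paper's treatment essentially step for step: the paper proves this theorem (attributed to \cite{SK03,Abo07}) by sorting the pairs by the $y$-coordinate, maintaining the table $C[1\ltdots N]$ via the recurrences~(\ref{eq:case-a}) and~(\ref{eq:case-b}), and applying the invariant technique to replace each maximization with an $\mathsf{RMaxQ}$ on the trees $\mathcal{T}$ and $\mathcal{I}$ of Lemma~\ref{lemma:searchtree}, updated after each $C[j']$ is computed, for $O(N\log N)$ total time. Your added care about the synchronization invariant and tie-breaking is a reasonable elaboration of the same argument, not a different route.
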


\subsection{Co-linear chaining on DAGs using a minimum path cover}

Let us now modify the above algorithm to work with DAGs, using the main technique of this paper.

\begin{theorem}
\label{thm:colinearchainingDAGadvanced}
Problem~\ref{prob:colinearchainingDAGlimited} on a labeled DAG $G=(V,E,\ell,\Sigma)$ of width $k$ and a set of $N$ input pairs can be solved in time $O(k|E| \log |V|+ kN \log N)$ time. 
\end{theorem}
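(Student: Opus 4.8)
The plan is to instantiate the framework of Section~\ref{sec:framework} with $\bigoplus = \max$, running the sequence algorithm behind Theorem~\ref{thm:colinearchaining} simultaneously along the $k$ paths of a minimum path cover. First I would compute an MPC $P_1,\dots,P_k$ of $G$ in time $O(k|E|\log|V|)$ by Theorem~\ref{thm:MPC}, run Lemma~\ref{lemma:last2reach} to obtain $\lasttoreach[v,i]$ for all $v\in V$, $i\in[1..k]$, and derive from it the strict variant $\lasttoreach'[v,i] := \max_{u\in N^-(v)}\lasttoreach[u,i]$ --- the last node of $P_i$ reaching $v$ through a \emph{non-empty} path, which is what the precedence in Problem~\ref{prob:colinearchainingDAGlimited} asks for. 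By (the proof of) the Observation this gives $R^-(v)\setminus\{v\} = \bigcup_{i}R'_i$, with $R'_i$ the prefix of $P_i$ up to $\lasttoreach'[v,i]$. From these I would build the forward-link lists $\forward[u]=\{(v,i):\lasttoreach'[v,i]=u\}$ and bucket the $N$ input pairs by the startpoint and by the endpoint of their paths. All of this costs $O(k|E|\log|V|)$.

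Next I would keep $2k$ search trees $\mathcal{T}_1,\dots,\mathcal{T}_k$ and $\mathcal{I}_1,\dots,\mathcal{I}_k$ of Lemma~\ref{lemma:searchtree} over the $O(N)$ coordinate-compressed values appearing in $M$, playing the roles of the trees $\mathcal{T}$, $\mathcal{I}$ in the invariant-rewrite of \eqref{eq:case-a}--\eqref{eq:case-b}. Processing nodes $v=1,\dots,|V|$ in topological order, each node $v$ is handled in three phases: (1) for every pair $M[j]$ with path startpoint $v$, combine the accumulators $a_j,b_j$ collected so far into $C^{\mathtt a}[j]=(M[j].d-M[j].c+1)+a_j$, $C^{\mathtt b}[j]=M[j].d+b_j$, $C[j]=\max(C^{\mathtt a}[j],C^{\mathtt b}[j])$, with $a_j$ initialized so that $M[j]$ alone remains a valid chain, exactly as the $(0,0)$ leaf does in Theorem~\ref{thm:colinearchaining}; (2) for every pair $M[j]$ with path endpoint $v$ (so $C[j]$ is already available, having been computed in phase (1) at its startpoint $\le v$), and every $i$ with $v\in P_i$, perform $\mathcal{T}_i.\mathsf{update}(M[j].d,C[j])$ and $\mathcal{I}_i.\mathsf{update}(M[j].d,C[j]-M[j].d)$, keeping the maximum over pairs sharing a key; (3) for every link $(v',i)\in\forward[v]$ and every pair $M[j]$ with path startpoint $v'$, update $a_j\leftarrow\max(a_j,\mathcal{T}_i.\mathsf{RMaxQ}(0,M[j].c-1))$ and $b_j\leftarrow\max(b_j,\mathcal{I}_i.\mathsf{RMaxQ}(M[j].c,M[j].d))$. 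Executing the phases in this order ensures that when $(v',i)$ is handled in phase (3), $\mathcal{T}_i$ and $\mathcal{I}_i$ hold exactly the pairs whose path endpoint lies on the prefix of $P_i$ ending at $\lasttoreach'[v',i]=\index[v,i]$ --- i.e., the restriction to $P_i$ of the valid predecessors of $M[j]$.

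Correctness then reduces to two observations: taking the $\max$ over $i$ of the per-path range maxima reconstructs precisely the maxima of \eqref{eq:case-a}--\eqref{eq:case-b} over all valid predecessors (by the Observation's strict variant), so each $C[j]$ obeys the recurrence of Theorem~\ref{thm:colinearchainingDAGtrivial}; and, since pairs whose endpoints lie on a fixed $P_i$ are inserted into $\mathcal{T}_i,\mathcal{I}_i$ in order of position along $P_i$ (consistent with topological order), the argument of~\cite{Abo07} that proper inclusions may be left computed incorrectly applies on each path, hence after the $\max$. The answer is $\max_jC[j]$, and the chain itself is recovered by standard traceback pointers. For the running time, phase (1) is $O(N)$ overall; phase (2) costs $O(\log N)$ per (pair, containing-path) incidence, and an endpoint lies on at most $k$ cover paths, so $O(kN\log N)$; phase (3) costs $O(\log N)$ per (link, pair starting at the link's node) incidence, and for each fixed $i$ these sum to at most $N$, so $O(kN\log N)$ again; building the trees is $O(kN\log N)$ (one sort plus $O(kN)$ construction). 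Together with the $O(k|E|\log|V|)$ preprocessing this gives the claimed $O(k|E|\log|V|+kN\log N)$.

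I expect the main obstacle to be exactly this bookkeeping rather than any single hard idea. The queries $\mathsf{RMaxQ}(0,M[j].c-1)$ and $\mathsf{RMaxQ}(M[j].c,M[j].d)$ have ranges that depend on the querying pair, so --- as already in the LCS algorithm of Section~\ref{sect:LCS} --- one cannot propagate a single value per forward link but must funnel each path's query result into per-pair accumulators, and one must pin down the phase order (and the single-node-path degenerate case) so that each $\mathcal{T}_i$ is in the state a given querying pair expects, neither ahead nor behind. Using the strict $\lasttoreach'$ rather than $\lasttoreach$ is a small but necessary adjustment, since otherwise a pair whose path endpoint coincides with the next pair's path startpoint would be wrongly treated as a predecessor.
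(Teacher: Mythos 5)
Your proposal is correct and follows essentially the same route as the paper's: a minimum path cover with forward propagation links, $2K$ range-maximum search trees keyed by $M[j].d$, updates performed at path endpoints and queries issued at path startpoints via the links, all combined with the invariant rewriting of recurrences (\ref{eq:case-a})--(\ref{eq:case-b}) and the same $O(kN\log N + k|E|\log|V|)$ accounting. Your strict variant $\lasttoreach'$ is a sensible tightening of a detail the paper's pseudocode leaves implicit (it uses $\lasttoreach$, under which a node reaches itself, whereas Problem~\ref{prob:colinearchainingDAGlimited} demands a non-empty connecting path), but this does not change the substance of the argument.
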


\begin{proof}
Assume we have a path cover of size $K$ and $\forward[u]$ computed for all $u\in V$. For each path $i\in [1..K]$, we create two binary search trees $\mathcal{T}_i$ and $\mathcal{I}_i$. As a reminder, these trees correspond to coverages for pairs that do not, and do overlap, respectively, on the sequence. Moreover, recall that in Problem~\ref{prob:colinearchainingDAGlimited} we do not consider solutions where consecutive paths in the graph overlap.

As keys, we use $M[j].d$, for every pair $M[j]$, and additionally the key 0. The value of every key is initialized to $-\infty$. 

After these preprocessing steps, we process the nodes in topological order, as detailed in Algorithm~\ref{algo:colinearchainingDAGkpathcover}. If node $v$ corresponds to the endpoint of some $M[j].P$, we update the trees $\mathcal{T}_i$ and $\mathcal{I}_i$ for all covering paths $i$ containing node $v$. Then we follow all forward propagation links $(w,i) \in \forward[v]$ and update $C[j]$ for each path $M[j].P$ starting at $w$, taking into account all pairs whose path endpoints are in covering path $i$. Before the main loop visits $w$, we have processed all forward propagation links to $w$, and the computation of $C[j]$ has taken all previous pairs into account, as in the naive algorithm, but now indirectly through the $K$ search trees. Exceptions are the pairs overlapping in the graph, which we omit in this problem statement. The forward propagation ensures that the search tree query results are indeed taking only reachable pairs into account. While $C[j]$ is already computed when visiting $w$, the startpoint of $M[j].P$, the added coverage with the pair is updated to the search trees only when visiting the endpoint. 


There are $N K$ forward propagation links, and both search trees are queried in $O(\log N)$ time. All the search trees containing a path endpoint of a pair are updated. Each endpoint can be contained in at most $K$ paths, so this also gives the same bound $2N K$ on the number of updates. With Theorem~\ref{thm:MPC} plugged in, we have $K = k$ and the total running time becomes $O(k|E| \log |V|+k N \log N)$.
\qed
\end{proof}

\begin{algorithm}[t]
\KwIn{DAG $G=(V,E)$, a path cover $P_1,P_2,\ldots, P_K$ of $G$, and $N$ pairs $M[1],M[2],\ldots,M[N]$ of the form $(P,[c \ltdots d])$.}
\KwOut{The index $j$ giving $\max_j C[j]$.}
Use Lemma~\ref{lemma:last2reach} to find all forward propagation links\;
\For{$i \gets 1$ to $K$}{
  Initialize search trees $\mathcal{T}_i$ and $\mathcal{I}_i$ with keys $M[j].d$, $1 \leq j \leq N$, and with key $0$, all keys associated with values $-\infty$\;
  $\mathcal{T}_i. \mathsf{update}(0,0)$\; 
  $\mathcal{I}_i. \mathsf{update}(0,0)$\; 
}
\tcc{Save to $\mathtt{start}[i]$ (respectively, $\mathtt{end}[i]$) the indexes of all pairs whose path starts (respectively, ends) at $i$.}
\For{$j \gets 1$ to $N$}{
	$\mathtt{start}[M[j].P.first]. \mathsf{push}(j)$\;
	$\mathtt{end}[M[j].P.last]. \mathsf{push}(j)$\;
}
\For{$v \in V$ in topological order}{
    \For{$j \in \mathtt{end}[v]$}{
       \tcc{Update the search trees for every path that covers $v$, stored in $\mathtt{paths}[v]$.}
       \For{$i \in \mathtt{paths}[v]$}{
          $\mathcal{T}_i. \mathsf{update}(M[j].d,C[j])$\;
          $\mathcal{I}_i. \mathsf{update}(M[j].d,C[j]-M[j].d)$\;
       }
    }
   \For{$(w,i) \in \forward[v]$}{
      \For{$j \in \mathtt{start}[w]$}{
           $C^\mathtt{a}[j] \gets (M[j].d-M[j].c+1)+\mathcal{T}_i.\mathsf{RMaxQ}(0,M[j].c-1)$\;
           $C^\mathtt{b}[j] \gets M[j].d+\mathcal{I}_i.\mathsf{RMaxQ}(M[j].c,M[j].d)$\;
           $C[j] \gets \max(C[j],C^\mathtt{a}[j],C^\mathtt{b}[j])$\;
       }
    }
}

\Return{$\argmax_j C[j]$}\;
\caption{\label{algo:colinearchainingDAGkpathcover}
Co-linear chaining between a sequence and a DAG using a path cover.}
\end{algorithm}

Appendix~\ref{sect:pathoverlaps} shows how to handle the case of path overlaps, giving the following result:

\begin{theorem}
Let $G=(V,E,\ell,\Sigma)$ be a labeled DAG and let $M$ be a set of $N$ pairs of the form $(P,[c \ltdots d])$. The algorithms from Theorems~\ref{thm:colinearchainingDAGtrivial} and \ref{thm:colinearchainingDAGadvanced} can be modified to solve Problem~\ref{prob:colinearchainingDAG} with additional time $O(L\log^2 |V|)$ or $O(L+\mathtt{\#overlaps})$, where $L$ is at most the input length and $\mathtt{\#overlaps}$ is the number of overlaps between the input paths.
\label{thm:CLC-overlaps}
\end{theorem}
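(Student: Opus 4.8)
The plan is to handle the only case that Problems~\ref{prob:colinearchainingDAGtrivial} and \ref{thm:colinearchainingDAGadvanced} ignore: consecutive pairs $s_{j-1},s_j$ whose paths $s_{j-1}.P$ and $s_j.P$ have a genuine suffix‑prefix overlap (with $s_j.P$ not contained in $s_{j-1}.P$). The key observation is that such an overlap contributes a \emph{third} way to chain $M[j']$ before $M[j]$, in addition to cases (a) and (b) of Theorem~\ref{thm:colinearchainingDAGtrivial}: whenever $a_k=b_1,\dots,a_i=b_{1+i-k}$ witnesses $s_{j'}.P \prec s_j.P$, we get a valid transition as long as $s_{j'}.d < s_j.d$, and the coverage gained is $\max(M[j].d - M[j'].d,\, M[j].d-M[j].c+1)$ exactly as before, so only the \emph{feasibility} test changes, not the arithmetic. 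First I would make precise, for each pair $M[j]$, the set of ``overlap‑predecessors'': those $M[j']$ whose path ends with a block that is a prefix of $M[j].P$. I would observe that an overlap between $M[j'].P$ and $M[j].P$ is determined by a shared node together with consistent continuations, so the natural bookkeeping object is: for each node $x$, the list of pairs $M[j']$ such that $x$ lies on $M[j'].P$ and the suffix of $M[j'].P$ from $x$ onward coincides with a prefix of some $M[j].P$. The quantity $L$ in the statement is intended to be (a bound on) the total length $\sum_j |M[j].P|$ of the input paths, which is at most the input length; $\mathtt{\#overlaps}$ is the number of pairs $(j',j)$ that actually overlap.

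For the $O(L\log^2|V|)$ bound I would not enumerate overlapping pairs explicitly. Instead, I would hang at each node $x$ a two‑dimensional range structure keyed by (i) the $d$‑coordinate of the pair and (ii) enough information to identify which pairs $M[j']$ ending ``compatibly'' at $x$ can legally precede a given $M[j]$ starting its overlap at $x$. Concretely: when the topological sweep reaches a node $x$ that is the startpoint of the overlapping prefix of some $M[j].P$, the candidate predecessors are exactly those $M[j']$ whose path passes through $x$ and whose remaining suffix matches $M[j].P$; by processing the paths along the shared overlap in lock‑step (a standard prefix‑merging / trie‑on‑paths argument) these candidates form, at each node, a contiguous key range in a balanced search tree analogous to $\mathcal{I}_i$, so the query $\max\{C[j'] - M[j'].d : M[j'].d < M[j].d,\ \text{overlap compatible}\}$ is a single $\mathsf{RMaxQ}$, plus the symmetric ``long'' case. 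Each unit of path length triggers $O(\log|V|)$ work for maintaining these per‑node structures and one two‑dimensional (hence $O(\log^2|V|)$) range‑maximum query, giving the $O(L\log^2|V|)$ additive term. Folding the resulting extra candidate values into $C[j]$ via $\max$ is immediate and does not disturb correctness of the (a)/(b) computation, since, as in~\cite{Abo07}, dominated (contained) configurations are safely left miscomputed.

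For the alternative $O(L + \mathtt{\#overlaps})$ bound I would drop the range machinery and instead, while merging paths along shared overlaps, explicitly visit each overlapping pair $(j',j)$ once, updating $C[j] \gets \max(C[j], C[j'] + \max(M[j].d - M[j'].d,\ M[j].d-M[j].c+1))$ directly; to make each such update $O(1)$ one must guarantee that $C[j']$ is already final when the overlap $(j',j)$ is examined, which follows because $s_{j'}.P \prec s_j.P$ forces the endpoint of $s_{j'}.P$ to precede the endpoint of $s_j.P$ in topological order, matching the order in which Algorithm~\ref{algo:colinearchainingDAGkpathcover} finalizes the $C[\cdot]$ values. The $O(L)$ term covers the path‑merging traversal that enumerates, for every node, the pairs whose suffixes agree there. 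I expect the main obstacle to be the combinatorial argument that the ``overlap‑compatible predecessors'' of $M[j]$ at a shared node form a single contiguous range of keys in the $\mathcal I$‑type tree (for the first bound) and, more delicately, showing that one never double‑counts coverage: a pair $M[j']$ can be an (a)- or (b)-predecessor \emph{and} an overlap‑predecessor of $M[j]$, and one must argue—again in the spirit of the containment argument of~\cite{Abo07}—that taking the plain maximum over all three candidate families yields the correct optimum and that configurations discarded in one family are recovered in another.
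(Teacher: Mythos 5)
Your overall architecture coincides with the paper's: overlaps are treated as a third transition family in which only the feasibility test changes, the $O(L\log^2 |V|)$ bound comes from a two-dimensional range-maximum structure queried once per unit of path length, and the $O(L+\mathtt{\#overlaps})$ bound comes from enumerating the overlapping pairs explicitly (the paper does this via a generalized suffix tree, following \cite{Rizzi:2014aa}). But the step you yourself flag as ``the main obstacle'' --- that the overlap-compatible predecessors of $M[j]$ form a single contiguous key range --- is precisely the part that needs an actual mechanism, and your per-node ``lock-step trie merging'' sketch does not supply one. The paper's mechanism is global, not per node: build an FM-index over $T=(\prod_i \#(M[i].P)^{-1})\#$, the $\#$-separated concatenation of the \emph{reversed} input paths; permute the pairs (hence the array $C$) into the lexicographic order of the suffixes of $T$; then, while backward-searching $(M[j].P[1..k])^{-1}$, one extra backward step by $\#$ yields the interval $[i'..i]$ of exactly those pairs whose path has a length-$k$ suffix equal to $M[j].P[1..k]$. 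Contiguity is a consequence of this global reordering (which is legal because no other part of the algorithm depends on the order of the pairs), and a \emph{separate} interval --- hence a separate 2D query --- is produced for each overlap length $k$; that is where the bound of $O(L)$ queries comes from. Without an equivalent lexicographic grouping of path suffixes, your claim that the compatible predecessors at a node form one contiguous range in an $\mathcal{I}$-type tree is unsupported.

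There is also an arithmetic slip: the new coverage contributed by $M[j]$ when chained after $M[j']$ is $M[j].d-\max(M[j'].d,\,M[j].c-1)$, i.e.\ the \emph{minimum} of $M[j].d-M[j'].d$ and $M[j].d-M[j].c+1$, not their maximum; taking the maximum credits the uncovered gap between $M[j'].d$ and $M[j].c$ whenever $M[j'].d<M[j].c-1$. In the range-tree version this is why the two cases must be kept in separate structures (a $\mathcal{T}$-type query restricted to keys $<M[j].c$ and an $\mathcal{I}$-type query on keys in $[M[j].c..M[j].d]$), exactly as in the non-overlapping algorithm, rather than maximizing both formulas over the same predecessor set. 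Your concern about a pair appearing in several transition families is, by contrast, harmless for the usual reason from \cite{Abo07}: each family yields a valid lower bound and the optimum is realized in at least one of them.
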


The bound $O(L+\mathtt{\#overlaps})$ comes as a direct consequence of using a generalized suffix tree to compute the overlaps in advance \cite[proof of Theorem 2]{Rizzi:2014aa}. With the overlaps given, one can process each in constant time to see if they give the maximum for $C[j]$. The other bound $O(L\log^2 |V|)$ comes from backward searching a subpath in a concatenation of all subpaths using FM-index. At each step a range can be identified that contains all subpaths with suffix-prefix overlap with the current one. This range limits the keys suitable in the binary search trees storing the already computed coverage values, and thus a two-dimensional range search is needed (see Appendix~\ref{sect:pathoverlaps}).


\section{Discussion and experiments\label{sect:discussion}}

For applying our solutions to Problem~\ref{prob:colinearchainingDAG} in practice, one first needs to find the alignment anchors. As explained in the problem formulation, alignment anchors are such pairs $(P,[c \ltdots d])$ where $P$ is a path in $G$ and $\ell(P)$ matches $R[c \ltdots d]$. 
With sequence inputs, such pairs are usually taken to be \emph{maximal exact matches} (MEMs) and can be retrieved in small space in linear time \cite{BCMK13,Bel14}. It is largely an open problem how to retrieve MEMs between a sequence and a DAG efficiently: The case of length-limited MEMs is studied in \cite{GCSA2}, based on an extension of \cite{GCSA} with features such as suffix tree functionality. On the practical side, anchor finding has already been incorporated into tools for conducting alignment of a sequence to a DAG \cite{limasset2016read,novak2016graph}. 

\No{
However, practical approaches exist; for example, there is an index~\cite{GCSA} based on the Burrows-Wheeler transform that takes small space on the average case, under some distribution of DAGs. This index supports fast exact pattern matching, so one can retrieve all maximal exact matches by doing the backward search for all prefixes of $R$ and pruning the occurrence lists appropriately. 
More principled ways are possible for reporting length-limited MEMs~\cite{GCSA2}, by improving~\cite{GCSA} with features such as suffix tree functionality.}

For the purpose of demonstrating the efficiency of our MPC-approach applied to co-linear chaining, we implemented a MEM-finding routine based on simple dynamic programming. We leave it for future work to incorporate a practical procedure (e.g. like those in \cite{limasset2016read,novak2016graph}). 
We tested the time improvement of our MPC-approach (Theorem~\ref{thm:colinearchainingDAGadvanced}) over the trivial algorithm (Theorem~\ref{thm:colinearchainingDAGtrivial}) \alex{on the sequence graphs of annotated human genes. Out of all the 62219 genes in the HG38 annotation for all human chromosomes, we singled out 8628 genes such that their sequence graph had at least 5000 nodes. Out of these, we picked 500 genes at random.}

\alex{The size of the graphs for these 500 genes varied between $|V|=5023$ and $|V|=30959$ vertices. Their width, i.e., the number of paths in the MPC, varied between  $k=1$ and $k=15$. (The number of graphs for each value of $k$ is listed in the column \#graphs of the top table of Figure~\ref{table:running-times}.) The number of anchors, $N$, for patterns of length 1000 varied between $10^1$ and $10^5$. As shown in Figure~\ref{table:running-times}, with small values of $N$, our MPC-based co-linear chaining algorithm was twice as fast as the trivial algorithm. When values of $N$ were increased from $10^1$ to $10^5$, the difference increased to two orders of magnitude.}



\begin{figure}[t]
\footnotesize
\centering
\begin{tabular}{|r|r|r|r|r|}
\hline
~~$k$ & \#graphs & mean $|V|$ & MPC method & Naive method\\
\hline
1 & 75 & 7275 & 18 {\footnotesize $\pm$ 27}ms & 5638 {\footnotesize $\pm$ 12378}ms\\
2 & 117 & 8109 & 23 {\footnotesize $\pm$ 36}ms & 6355 {\footnotesize $\pm$ 17641}ms\\
3 & 93 & 8306 & 27 {\footnotesize $\pm$ 41}ms & 6499 {\footnotesize $\pm$ 17940}ms\\
4 & 99 & 8933 & 32 {\footnotesize $\pm$ 49}ms & 6864 {\footnotesize $\pm$ 17868}ms\\
5 & 48 & 9779 & 40 {\footnotesize $\pm$ 59}ms & 8053 {\footnotesize $\pm$ 18742}ms\\
6 & 32 & 10265 & 45 {\footnotesize $\pm$ 65}ms & 7934 {\footnotesize $\pm$ 16659}ms\\
7 & 16 & 9928 & 41 {\footnotesize $\pm$ 59}ms & 6973 {\footnotesize $\pm$ 15345}ms\\
8 & 10 & 11052 & 57 {\footnotesize $\pm$ 83}ms & 8731 {\footnotesize $\pm$ 17497}ms\\
9 & 4 & 9538 & 52 {\footnotesize $\pm$ 77}ms & 6252 {\footnotesize $\pm$ 13906}ms\\
10 & 3 & 10833 & 61 {\footnotesize $\pm$ 102}ms & 7055 {\footnotesize $\pm$ 16221}ms\\
11 & 2 & 11186 & 50 {\footnotesize $\pm$ 70}ms & 5932 {\footnotesize $\pm$ 10548}ms\\
15 & 1 & 16848 & 154 {\footnotesize $\pm$ 194}ms & 25253 {\footnotesize $\pm$ 43873}ms\\\hline
\end{tabular}\\
\vspace{.5cm}
\begin{tabular}{|r|r|r|r|}
\hline
~~$N$ & mean $|V|$ & MPC method & Naive method\\
\hline
$(10^{0}..10^{1}]$ & 8681 & 8 {\footnotesize $\pm$ 5}ms & 15 {\footnotesize $\pm$ 8}ms\\
$(10^{1}..10^{2}]$ & 8808 & 8 {\footnotesize $\pm$ 5}ms & 79 {\footnotesize $\pm$ 68}ms\\
$(10^{2}..10^{3}]$ & 9732 & 10 {\footnotesize $\pm$ 7}ms & 524 {\footnotesize $\pm$ 392}ms\\
$(10^{3}..10^{4}]$ & 6824 & 70 {\footnotesize $\pm$ 22}ms & 15153 {\footnotesize $\pm$ 5875}ms\\
$(10^{4}..10^{5}]$ & 12235 & 153 {\footnotesize $\pm$ 66}ms & 49482 {\footnotesize $\pm$ 31900}ms\\
\hline
\end{tabular}

\caption{The average running times, and their standard deviation, (in milliseconds) of the two approaches for co-linear chaining between a sequence and a DAG (Problem~\ref{prob:colinearchainingDAG}), for all inputs of a certain width $k$ (top), and with $N$ belonging to a certain interval (below). Both approaches are given the same anchors; the time for finding them is not included. \label{table:running-times}}
\end{figure}

The improved efficiency when compared to the naive approach gives reason to believe a practical sequence-to-DAG aligner can be engineered along the algorithmic foundations given here. Future work includes the incorporation of a practical anchor-finding method, and testing whether the complete scheme improves transcript prediction through improved finding of exon chains \cite{KTM17}.

\alex{On the theoretical side, it remains open whether the MPC algorithm could benefit from a better initial approximation and/or one that is faster to compute. More generally, it remains open whether the overall bound $O(k|E|\log|V|)$ for the MPC problem can be improved.}



\paragraph{Acknowledgements.}

We thank Djamal Belazzougui for pointers on backward step on large alphabet and Gonzalo Navarro for pointing out the connection to pattern matching on hypertexts. This work was funded in part by the Academy of Finland (grant 274977 to AIT and grants 284598 and 309048 to AK and to VM), and by Futurice Oy (to TP).

\bibliographystyle{plain}
\bibliography{biblio}

\newpage
\appendix

\section{Two simple applications}

\subsection{Reachability queries \label{appendix:reachability}}

An immediate application of Theorem~\ref{thm:MPC} and of the values $\lasttoreach[v,i]$ is for solving reachability queries. If we have all these $O(k|V|)$ values, then we can answer in constant time whether a node $y$ is reachable from a node $x$, as in \cite{Jagadish:1990:CTM:99935.99944}: we check $\index[x,i] \leq \index[\lasttoreach[y,i],i]$, where $\index$ was defined in the proof of Lemma~\ref{lemma:last2reach}, $i$ is a path containing $x$, and we take by convention $\index[-1,i] = -1$. Recall also that reachability queries in an arbitrary graph can be reduced to solving reachability queries in its DAG of strongly connected components, because nodes in the same component are pairwise reachable. See Table~\ref{tab:reachbility} for existing tradeoffs for solving reachability queries.\footnote{Note that \cite{chain-cover} incorrectly attributes to \cite{Jagadish:1990:CTM:99935.99944} query time $O(\log k)$, and as a consequence \cite{7750623,Jin:2011:PER:1929934.1929941} incorrectly mention query time $O(\log k)$ for \cite{chain-cover}.}

\begin{corollary}
\label{cor:reachbility}
Let $G = (V,E)$ be an arbitrary directed graph and let the width of its DAG of strongly connected components be $k$. In time $O(k|E|\log |V|)$ we can construct from $G$ an index of size $O(k|V|)$, so that for any $x,y \in V$ we can answer in~$O(1)$~time whether $y$ is reachable from $x$.
\end{corollary}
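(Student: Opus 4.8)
The plan is to combine Theorem~\ref{thm:MPC} with the reachability-indexing idea of Jagadish~\cite{Jagadish:1990:CTM:99935.99944}, after first reducing from an arbitrary directed graph to a DAG. First I would compute the strongly connected components of $G$ in $O(|V|+|E|)$ time (Tarjan's algorithm), and form the condensation DAG $G'=(V',E')$, where $|V'|\le|V|$ and $|E'|\le|E|$; by definition the width of $G'$ is $k$. Since any two nodes in the same SCC are mutually reachable, and reachability between nodes in different SCCs in $G$ is equivalent to reachability between their SCCs in $G'$, it suffices to build a constant-query-time reachability index on $G'$, plus a lookup table mapping each $v\in V$ to the identifier of its SCC (stored in $O(|V|)$ space, built during the SCC computation).

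Next I would apply Theorem~\ref{thm:MPC} to $G'$ to obtain a minimum path cover $P_1,\dots,P_k$ in time $O(k|E'|\log|V'|)=O(k|E|\log|V|)$, and then invoke Lemma~\ref{lemma:last2reach} to compute, in $O(k|E'|)$ additional time, the values $\index[\cdot,i]$ and $\lasttoreach[\cdot,i]$ for every node of $G'$ and every $i\in[1..k]$. These are $O(k|V'|)=O(k|V|)$ integers in total, which together with the SCC lookup table constitute the index; its size is $O(k|V|)$ as claimed, and the total construction time is $O(k|E|\log|V|)$ since every ingredient fits within that bound.

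For the query: given $x,y\in V$, look up their SCC representatives $x',y'\in V'$ in $O(1)$ time. If $x'=y'$, answer ``yes''. Otherwise pick any path index $i$ with $x'\in P_i$ (one such $i$ can be stored with $x'$ at construction time, in $O(|V|)$ extra space), and answer ``yes'' iff $\lasttoreach[y',i]\neq -1$ and $\index[x',i]\le\lasttoreach[y',i]$, where $\lasttoreach[y',i]$ is already stored as an index along $P_i$. Correctness is the key step to argue: if $x'$ reaches $y'$ then $x'$ lies on $P_i$ at or before the last node of $P_i$ that reaches $y'$, so $\index[x',i]\le\lasttoreach[y',i]$; conversely, if $\index[x',i]\le\lasttoreach[y',i]$, then the node $u=\lasttoreach[y',i]$ of $P_i$ satisfies $x'\preceq u$ along $P_i$ (hence $x'$ reaches $u$) and $u$ reaches $y'$, so $x'$ reaches $y'$ by transitivity. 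The main obstacle is simply being careful with the conventions ($\index[-1,i]=-1$, the empty-path case in the definition of $\lasttoreach$) so that the single comparison handles the ``no predecessor on $P_i$ reaches $y'$'' case uniformly; there is no real difficulty beyond bookkeeping, since both the construction bound and the $O(1)$ query follow immediately from the cited results.
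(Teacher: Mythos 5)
Your proposal is correct and follows essentially the same route as the paper: reduce to the condensation DAG, build a minimum path cover via Theorem~\ref{thm:MPC}, compute $\index[\cdot,i]$ and $\lasttoreach[\cdot,i]$ via Lemma~\ref{lemma:last2reach}, and answer a query by comparing $\index[x',i]$ with $\lasttoreach[y',i]$ for a path $i$ containing $x'$ (the paper attributes this query scheme to Jagadish). Your explicit two-directional correctness argument for the comparison is a welcome elaboration of what the paper only cites.
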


\begin{table}[t!]
\begin{center}
\renewcommand{\arraystretch}{1.5}
\begin{tabular}{|c|c|c|c|}\hline
Construction time & Index size & Query time & Reference \\\hline 
$O(k|E|\log |V|)$ & $O(k|V|)$ & $O(1)$ & this paper\\\hline 
$O(|V|^2 + k\sqrt{k}|V|)$ & $O(k|V|)$ & $O(1)$ & \cite{chain-cover}\\\hline 
$O(k|E|)$ or $O(|V||E|)$ & $O(k|V|)$ & $O(\log^2 k)$ & \cite{Jin:2011:PER:1929934.1929941}\\\hline
$O(k(|V| + |E|))$ & $O(k|V|)$ & $O(k)$ or $O(|V| + |E|)$ & \cite{Yildirim:2010:GSR:1920841.1920879}\\
\hline 
\end{tabular} 
\end{center}
\caption{Previous comparable space/time tradeoffs for solving reachability queries. Compiled from~\cite[Table 1]{7750623}.\label{tab:reachbility}}
\end{table}

\subsection{The LIS problem \label{appendix:LIS}}
\label{sec:lis}

The LIS problem asks us to delete the minimum number of values from an input sequence $s_1\cdots s_n$ such that remaining values form a strictly increasing series of values. 
Here the input sequence is assumed to come from an ordered alphabet $\Sigma$. For example, on input sequence $1,4,2,3,7,5,6$, from the alphabet $\Sigma=\{1,2,3,4,5,6,7\}$, the unique optimal solution is $1,2,3,5,6$. Such a longest increasing subsequence can be found in the optimal $O(n\log n)$ time~\cite{computinglis}. 

This optimal algorithm works as follows. We first map $\Sigma$ to a subset of $\{1,2,\ldots,n\}$ with an order-preserving mapping, in $O(n \log n)$ time (by e.g., sorting the sequence elements, and relabeling by the ranks in the order of distinct values). We then store, at every index $i$ of the input sequence, the value $\mathtt{LLIS}[i]$ defined as the length of the longest strictly increasing subsequence ending at $i$ and using the $i$-th symbol. 
The values $\mathtt{LLIS}[i]$ can be computed by dynamic programming, by storing all previous key-value pairs $(s_j,\mathtt{LLIS}[j])$ in a search tree $\mathcal{T}$ as in Lemma~\ref{lemma:searchtree}, and querying $\mathcal{T}.\mathsf{RMaxQ}(0,s_i-1)$. 

Consider the following extension of the LIS problem to a labeled DAG $G=(V,E,\ell,\Sigma)$ of width $k$. For a path $P = (v_1,\dots,v_t)$ in $G$, let the \emph{label} of $P$, denoted $\ell(P)$, be the concatenation of the labels of the nodes of $P$, namely $\ell(v_1)\cdots\ell(v_t)$. Among all paths $P$ in $G$, and among all subsequences of $\ell(P)$, we need to find a longest strictly increasing subsequence. 

We now explain how to extend the previous dynamic programming algorithm for this problem. We analogously map $\Sigma$ to a subset of $\{1,2,\ldots,|V|\}$ with an order-preserving mapping in $O(|V| \log |V|)$ time, as above. Recall that we assume $V = \{1,\dots,|V|\}$, where $1,\dots,|V|$ is a topological order. Assume also that we have $K$ paths to cover $V$ and $\forward[u]$ is computed for all $u\in V$.

For each node $v$, we aim to analogously compute $\mathtt{LLIS}[v]$ as the length of a longest strictly increasing subsequence of the labels of all paths ending at $v$, with the property that $\ell(v)$ is the last element of this subsequence. 

\begin{sloppypar}
For each $i \in [1..K]$, we let $\mathcal{T}_i$ be a search tree as in Lemma~\ref{lemma:searchtree}, initialized with key-value pairs $(0,0),(1,-\infty),(2,-\infty),\ldots, (|V|,-\infty)$. The algorithm proceeds in the fixed topological ordering. Assume now that we are at some position $u$, and have already updated all search trees associated with the  covering paths going through $u$. For every $(v,i) \in \forward[u]$, we update $\mathtt{LLIS}[v]=\max(\mathtt{LLIS}[v],\mathcal{T}_i.\mathsf{RMaxQ}(0,\ell(v)-1)+1)$. Once the algorithm reaches $v$ in the topological ordering, value $\mathtt{LLIS}[v]$ has been updated from all $u'$ such that $(v,i) \in \forward[u']$. It remains to show how to update each $\mathcal{T}_i$ when reaching $v$, for all covering paths $i$ on which $v$ occurs. This is done as $\mathcal{T}_{i}.\mathsf{update}(\ell(v),\mathtt{LLIS}[v])$. Initialization is handled by the $(0,0)$ key-value pair so that any position can start a new increasing subsequence.  Figure~\ref{fig:example} shows an example.
\end{sloppypar}

\begin{figure}
\centering
\includegraphics[width=7cm]{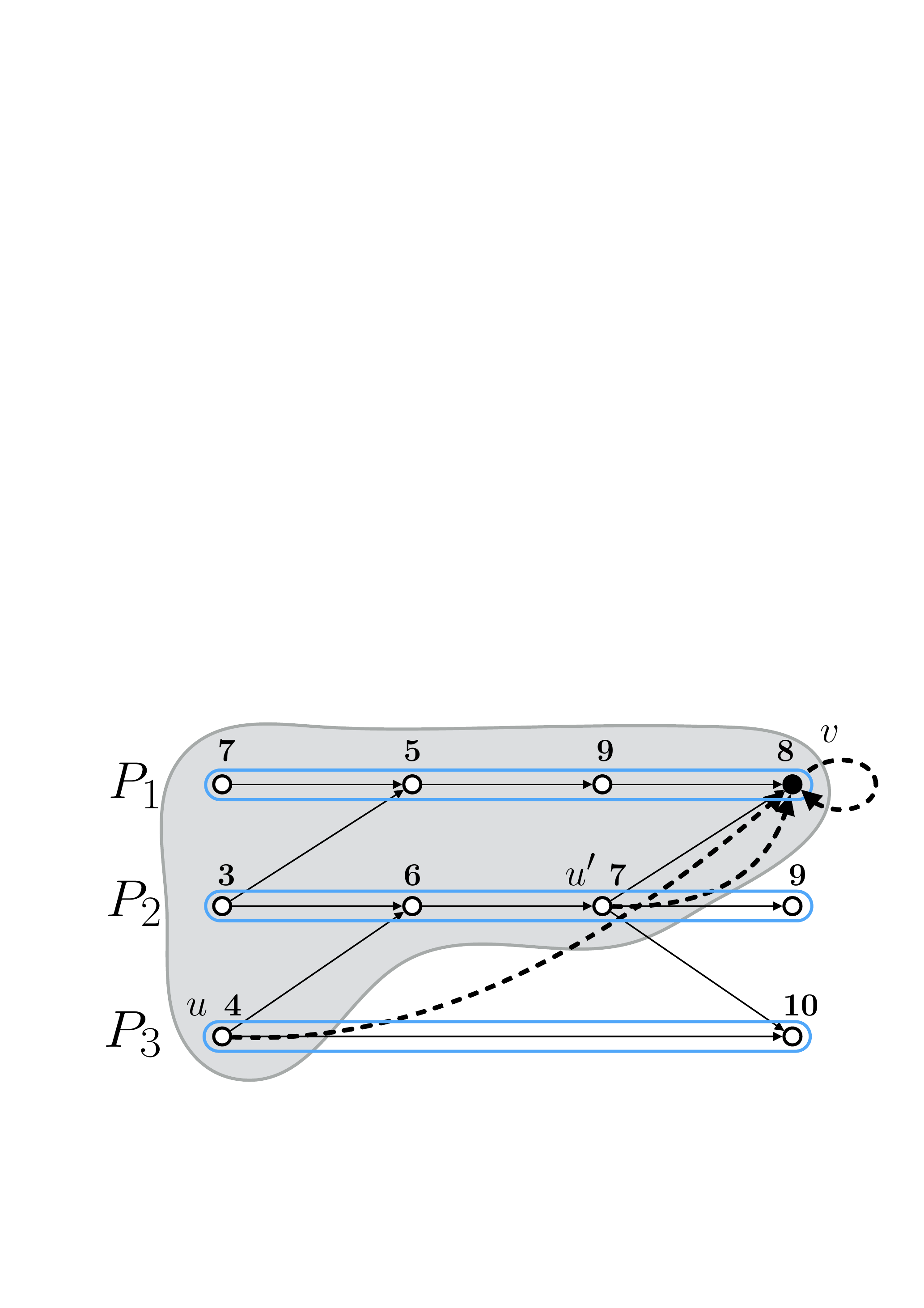}
\caption{Detailed steps for solving a LIS DAG instance, focusing on three nodes: $u$, $u'$, and $v$. In any topological order, node $u$ will be processed before node $u'$, which in turn will be processed before $v$. After computing $\mathtt{LLIS} [u] = 1$ we set $\mathtt{LLIS} [v] = \max (1, \mathcal{T}_3.\mathsf{RMaxQ} (0, 7) + 1) = 2$, since at that point $\mathcal{T}_3$ contains $(0, 0), (1, -\infty), \ldots, (3, -\infty), (4, 1), (5, -\infty), \ldots, (10, -\infty)$.  After computing $\mathtt{LLIS} [u'] = 3$, we set $\mathtt{LLIS} [v] = \max (2, \mathcal{T}_2.\mathsf{RMaxQ} (0, 7) + 1) = 4$, since at that point $\mathcal{T}_2$ contains $(0, 0), (1, -\infty), (2, -\infty), (3, 1), (4, -\infty), (5, -\infty), (6, 2), (7, 3), (8, -\infty), \ldots, (10, -\infty)$.  When we finally reach $v$ we leave $\mathtt{LLIS} [v]= \max (4, \mathcal{T}_1.\mathsf{RMaxQ} (0, 7) + 1) = 4$, since at that point $\mathcal{T}_1$ contains $(0,0), (1, -\infty), \ldots, (4, -\infty), (5, 2), (6, -\infty), (7, 1), (8, -\infty), (9, 3), (10, -\infty)$.
.\label{fig:example}}
\end{figure}

The final answer to the problem is $\max_{v\in V} \mathtt{LLIS}[v]$, with the actual LIS to be found with a standard traceback. The algorithm runs in $O(K |V| \log |V|)$ time. With Theorem~\ref{thm:MPC} plugged in, we have $K = k$ and the total running time becomes $O(k|E| \log |V|+k|V|\log |V|) =  O(k|E| \log |V|)$, under our assumption $|E| \geq |V| - 1$. The following theorem summarizes this result.

\begin{theorem}
Let $G = (V,E,\ell,\Sigma)$ be a labeled DAG of width $k$, where $\Sigma$ is an ordered alphabet. We can find a longest increasing subsequence in $G$ in time $O(k|E| \log |V|)$.
\end{theorem}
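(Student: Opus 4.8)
The plan is to piggyback entirely on the machinery already set up: Theorem~\ref{thm:MPC} to get a path cover of size $k$, Lemma~\ref{lemma:last2reach} to compute $\lasttoreach[v,i]$ and hence all forward links $\forward[u]$, and the balanced search tree of Lemma~\ref{lemma:searchtree} to realize the $\mathsf{RMaxQ}$ queries underlying the classical $O(n\log n)$ LIS algorithm. The correctness argument has two parts. First, I would establish the \emph{sequential} correctness: on a single path $P$ (the $k=1$ case), the algorithm is exactly the classical one, since processing nodes in topological order along $P$ inserts $(\ell(v),\mathtt{LLIS}[v])$ into $\mathcal{T}$ and queries $\mathcal{T}.\mathsf{RMaxQ}(0,\ell(v)-1)$, which returns the best $\mathtt{LLIS}[u]$ over predecessors $u$ on $P$ with strictly smaller label. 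Second, I would lift this to the DAG using the Observation: $R^-(v)=\bigcup_{i=1}^K R_i$, where $R_i$ is the prefix of $P_i$ up to $\lasttoreach[v,i]$. So $\mathtt{LLIS}[v]$, which should equal $1+\max\{\mathtt{LLIS}[u]: u\in R^-(v)\setminus\{v\},\ \ell(u)<\ell(v)\}$ (or $1$ if that set is empty), decomposes as a $\max$ over $i$ of the contribution from $R_i$, and the contribution from $R_i$ is precisely $\mathcal{T}_i.\mathsf{RMaxQ}(0,\ell(v)-1)+1$ \emph{evaluated at the moment $\mathcal{T}_i$ reflects exactly the nodes of $R_i$}.

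The crux is therefore the \textbf{timing/ordering argument}, which is the step I expect to be the main obstacle to state cleanly. I would argue it by induction on the topological order. The invariant is: when the main loop is about to process node $v$, (a) $\mathtt{LLIS}[v]$ has already received a pending update along every forward link $(v,i)$, i.e.\ for every $i$ the value $\mathcal{T}_i.\mathsf{RMaxQ}(0,\ell(v)-1)$ was read at the instant $\mathcal{T}_i$ held exactly the key-value pairs coming from $R_i$; and (b) for each path $j$, $\mathcal{T}_j$ currently contains $(\ell(w),\mathtt{LLIS}[w])$ for exactly those $w$ on $P_j$ that precede the current loop position. Point (b) holds because $\mathcal{T}_j$ is updated with $\mathtt{LLIS}[w]$ only when the loop reaches $w$, and $P_j$ is consistent with the topological order. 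For point (a): the node $\lasttoreach[v,i]=u$ satisfies $u<v$ in topological order (it reaches $v$), so $u$ is processed before $v$; when $u$ is processed, it first updates $\mathcal{T}_i$ with its own pair (since $u\in P_i$), so at that moment $\mathcal{T}_i$ holds the pairs of $R_i$ exactly (everything on $P_i$ up to and including $u$, by (b)); and then the link $(v,i)\in\forward[u]$ is followed, performing the $\mathsf{RMaxQ}$ and folding it into $\mathtt{LLIS}[v]$. The one subtlety is the case $v\in P_i$ itself — then $\lasttoreach[v,i]=v$, no forward link is used, and $v$'s own label must not count toward its own $\mathtt{LLIS}[v]$; this is handled because $\mathcal{T}_i$ is queried/updated for $v$ when the loop reaches $v$, using the semi-open query $\mathsf{RMaxQ}(0,\ell(v)-1)$, and the update $\mathcal{T}_i.\mathsf{update}(\ell(v),\mathtt{LLIS}[v])$ happens after the query, so strictly-increasing is enforced. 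I should also note that distinct nodes may share a label after the order-preserving remap only if they originally had equal labels, which is fine since equal labels are never both in an increasing subsequence and the $(0,0)$ sentinel allows any node to start a fresh subsequence.

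For the running time I would simply count: the order-preserving remap of $\Sigma$ costs $O(|V|\log|V|)$; Theorem~\ref{thm:MPC} gives the cover in $O(k|E|\log|V|)$; Lemma~\ref{lemma:last2reach} computes all $\lasttoreach$ (hence $\forward$) in $O(k|E|)$; initializing the $K=k$ trees costs $O(k|V|)$ by the build bound of Lemma~\ref{lemma:searchtree}; and the main loop performs, over all nodes, at most $\sum_u |\forward[u]| = O(k|V|)$ queries (one per forward link) and $O(k|V|)$ updates (each node lies on at most... well, the paths through it, summing to $O(k|V|)$ over the cover), each costing $O(\log|V|)$, for $O(k|V|\log|V|)$. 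Adding these and using $|E|\ge|V|-1$ collapses everything to $O(k|E|\log|V|)$, as claimed. The traceback for reconstructing an actual LIS is standard: store with each $\mathtt{LLIS}[v]$ a pointer to the node achieving the maximizing query, and walk backward from $\argmax_v \mathtt{LLIS}[v]$; this adds no asymptotic cost.
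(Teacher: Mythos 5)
Your proposal matches the paper's own proof essentially step for step: a minimum path cover via Theorem~\ref{thm:MPC}, forward propagation links via Lemma~\ref{lemma:last2reach}, one range-maximum search tree per cover path queried with $\mathsf{RMaxQ}(0,\ell(v)-1)$, and the same $O(k|E|\log|V|)$ accounting. Your explicit timing invariant (and the query-before-update handling of the self-link when $v$ lies on $P_i$) is consistent with the paper's intended order of operations, as illustrated in its worked example, so there is nothing to correct.
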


When the DAG is just a labeled path with $|E| = |V| - 1$ (modeling the standard LIS problem), then the algorithm from Lemma~\ref{lemma:approx-path-cover} returns one path ($K = 1$). The complexity is then $O(|V|\log|V|)$, matching the best possible bound for the standard LIS problem~\cite{computinglis}.

\section{Co-linear chaining with path overlaps\label{sect:pathoverlaps}}

We now consider how to extend the algorithms we developed for Problem~\ref{prob:colinearchainingDAGlimited} to work for the more general case of Problem~\ref{prob:colinearchainingDAG}, where overlaps between paths are allowed in a solution. The detection and merging of such path overlaps has been studied in \cite{Rizzi:2014aa}, and we tailor a similar approach for our purposes.

We use an \emph{FM-index} \cite{FM05} tailored for large alphabets \cite{HSS09}, and a two-dimensional range search tree \cite{BCKO08} modified to support range maximum queries. The former is used for obtaining all ranges $[i'..i]$ in the coverage array $C$ such that all input pairs $M[i'],\ldots, M[i]$ have a path $M[i''].P$, $i'\leq i''\leq i$, overlapping with the path $M[j].P$ of $j$-th input pair $M[j]$. Here, the endpoint of the $j$-th input pair is at node $v$ visited in topological order. This implies that the paths of the input pairs $[i'..i]$ have already been visited, and thus, by induction, that $C[i'..i]$ values have been correctly computed (subject to the modification we are about to study). The sequence ranges $[M[i''].c..M[i''].d]$  for all $i''\in [i'..i]$ may be arbitrarily located with respect to interval $[M[j].c..M[j].d]$, so we need to maintain an analogous mechanism with search trees of type $\mathcal{T}$ and $\mathcal{I}$ as in our co-linear chaining algorithm based on a path cover. This time we cannot, in advance, separate the input pairs to $K$ paths with different search trees, but we have a dynamic setting, with interval $[i'..i]$ deciding which values should be taken into account. This is where a two-dimensional range search tree is used to support these queries in $O(\log^2 N)$ time: Figure~\ref{fig:2dsearchtree} illustrates this.  

\begin{figure}[h!]
\centering
\includegraphics[width=9cm]{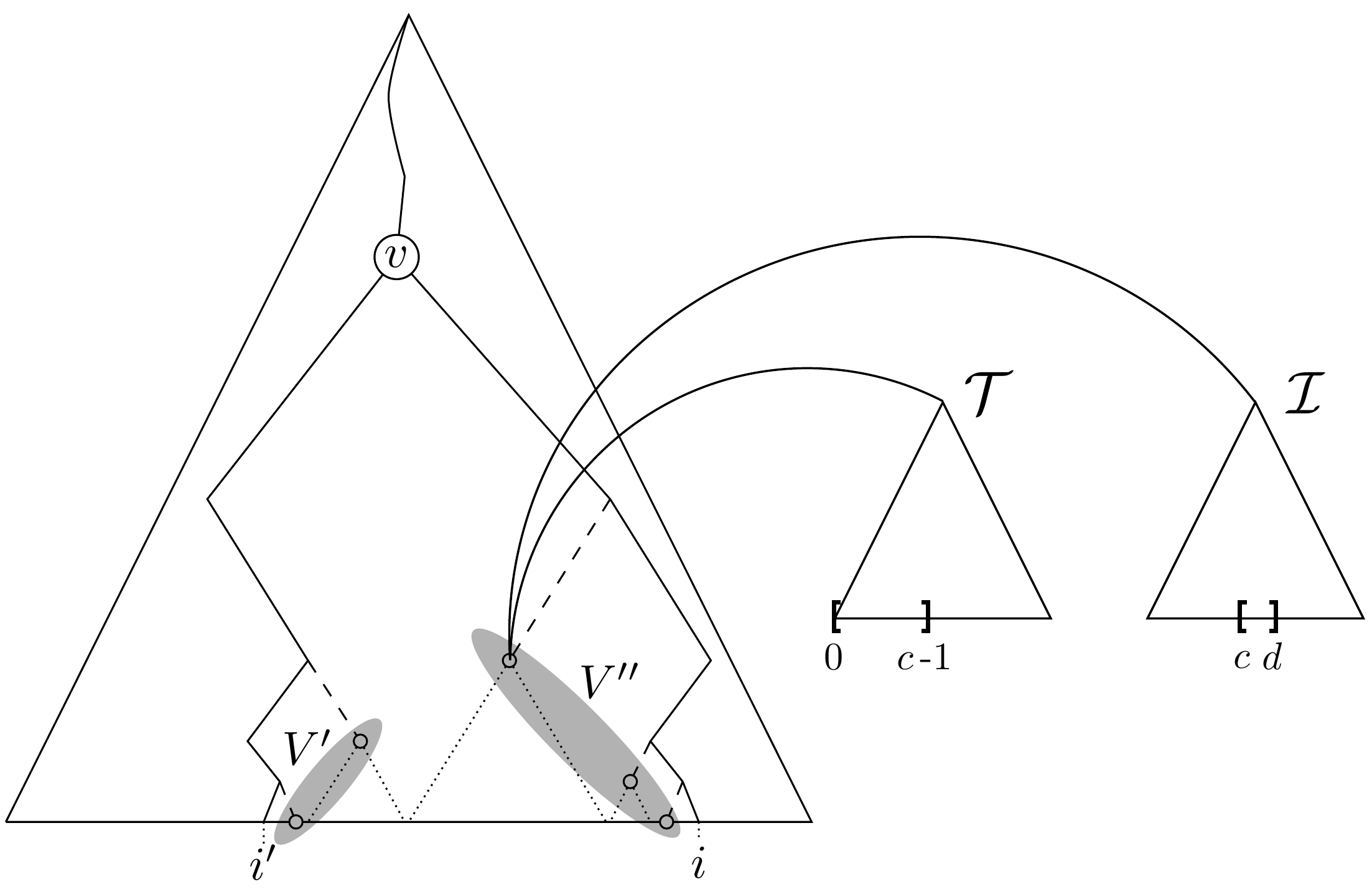}
\caption{Answering two-dimensional range maximum queries. A search interval $[i'..i]$ is split to $O(\log (i-i'+1))$ intervals / subtrees. In each subtree root we store a type $\mathcal{T}$ and a type $\mathcal{I}$ search tree of Lemma~\ref{lemma:searchtree} containing values $C[i'']$ or $C[i'']-M[i''].d$, respectively, for indexes $i''$ in the corresponding subinterval.  
\label{fig:2dsearchtree}}
\end{figure}

In what follows we show that $O(L)$ queries are sufficient to take all overlaps into account throughout the algorithm execution (holding for both the trivial algorithm and for the one based on a path cover), where $L=\sum_i |M[i].P|$---the sum of the path lengths---is at most the total input length. The construction will actually induce an order for the input pairs such that $O(L)$ queries are sufficient: Since the other parts of the algorithms do not use the order of input pairs directly, we can safely reorganize the input accordingly.  

\begin{sloppypar}
With this introduction, we are ready to consider how all the intervals $[i'..i]$ related to $j$-th pair are obtained. 
We build in $O(L \log\log  |V|)$ time the FM-index version proposed in \cite{HSS09} of sequences $T=(\prod_i \#(M[i].P)^{-1})\#$, where $\#$ is a symbol not in alphabet $\{1,2,\ldots,|V|\}$ and considered smaller than other symbols, e.g. $\#=0$, and  $X^{-1}$ denotes the reverse $X[|X|]X[|X-1]]\cdots X[1]$ of $X$.
\end{sloppypar}

For our purposes it is sufficient to know that the FM-index of $T$, when given an interval $I(X)$ corresponding to lexicographically-ordered suffixes that start with $X$, can determine the interval $I(cX)$ in $O(\log\log|V|)$ time \cite{HSS09}.
This operation is called \emph{backward step}.

\begin{sloppypar}
We use the index to search $M[j].P$ in the forward direction by searching its reverse with backward steps. Consider we have found interval $I((M[j].P[1..k])^{-1})$, for some $k$, such that backward step $I(\#(M[j].P[1..k])^{-1})$ results in a non-empty interval $[i'..i]$. This interval $[i'..i]$ corresponds to all suffixes of $T$ that have $\#(M[j].P[1..k])^{-1}$ as a prefix. That is, $[i'..i]$ corresponds to input pairs whose path suffix have a length $k$ overlap with the path prefix of $j$-th input pair. For this interval to match with coverage array $C$, we just need to rearrange the input pairs according to their order in the first $N$ rows of the array storing the lexicographic order of suffixes of $T$.
\end{sloppypar}

Since each backward step on the index may induce a range search on exactly one interval $[i'..i]$, the running time is dominated by the range queries.\footnote{A simple wavelet tree based FM-index would provide the same bound, but in case the range search part is later improved, we used the best bound for the subroutine.} On the other hand, this also gives the bound $L$ on the number of range queries, as claimed earlier. 

Alternatively, one can omit the expensive range queries and process each overlapping pair separately, to compute in constant time its contribution to $C[j]$. This gives another bound $O(L\log\log |V|+\mathtt{\#overlaps})$, where $\mathtt{\#overlaps}$ is the number of overlaps between the input paths. This can be improved to $O(L+\mathtt{\#overlaps})$ by using a generalized suffix tree to compute the overlaps in advance \cite[proof of Theorem 2]{Rizzi:2014aa}.

The result is summarized in Theorem~\ref{thm:CLC-overlaps} at page~\pageref{thm:CLC-overlaps}.

\end{document}